\def\BibTeX{{\rm B\kern-.05em{\sc i\kern-.025em b}\kern-.08em
    T\kern-.1667em\lower.7ex\hbox{E}\kern-.125emX}}
\newtheorem{thm}{Theorem}
\newtheorem{lem}{Lemma}
\newtheorem{defi}{Definition}
\newtheorem{ex}{Example}
\newenvironment{proof}{ \paragraph*{\hspace{-1em}Proof}}{\hfill$\square$}
\renewcommand{\vec}[1]{\ensuremath{\boldsymbol{#1}}}
\newcommand{\be}{\begin{equation}}
\newcommand{\ee}{\end{equation}}
\newcommand{\ba}{\begin{array}}
\newcommand{\ea}{\end{array}}
\newcommand{\bea}{\begin{eqnarray}}
\newcommand{\eea}{\end{eqnarray}}
\newcommand{\A}{\vec{A}}    
\newcommand{\As}{\vec{A}_{\mathcal{S}}}    
\newcommand{\Ws}{\vec{W}_{\mathcal{S}}}    
\newcommand{\X}{\vec{X}}    
\newcommand{\W}{\vec{W}}    
\newcommand{\WAX}{\vec{WAX}}    
\newcommand{\AX}{\vec{AX}}    
\newcommand{\ra}[1]{\mathrm{rank}\left( #1\right)}    
\renewcommand{\H}{\vec{H}}
\begin{document}
%
% paper title
% Titles are generally capitalized except for words such as a, an, and, as,
% at, but, by, for, in, nor, of, on, or, the, to and up, which are usually
% not capitalized unless they are the first or last word of the title.
% Linebreaks \\ can be used within to get better formatting as desired.
% Do not put math or special symbols in the title.
\title{Trade-offs in Decentralized Multi-Antenna Architectures: The WAX Decomposition}
%
%
% author names and IEEE memberships
% note positions of commas and nonbreaking spaces ( ~ ) LaTeX will not break
% a structure at a ~ so this keeps an author's name from being broken across
% two lines.
% use \thanks{} to gain access to the first footnote area
% a separate \thanks must be used for each paragraph as LaTeX2e's \thanks
% was not built to handle multiple paragraphs
%

\author{Juan Vidal Alegr\'ia,
        Fredrik Rusek, and Ove Edfors% <-this % stops a space
        \thanks{This paper is built upon previous results presented at the 2020 IEEE ICC conference \cite{icc2020}.}
}

\maketitle

\begin{abstract}
Current research on multi-antenna architectures is trending towards increasing the amount of antennas in the base stations (BSs) so as to increase the spectral efficiency. As a result, the interconnection bandwidth and computational complexity required to process the data using centralized architectures is becoming prohibitively high. Decentralized architectures can reduce these requirements by pre-processing the data before it arrives at a central processing unit (CPU). However, performing decentralized processing introduces also cost in complexity/interconnection bandwidth at the antenna end which is in general being ignored. This paper aims at studying the interplay between level of decentralization and the associated complexity/interconnection bandwidth requirement at the antenna end. To do so, we propose a general framework for centralized/decentralized architectures that can explore said interplay by adjusting some system parameters, namely the number of connections to the CPU (level of decentralization), and the number of multiplications/outputs per antenna (complexity/interconnection bandwidth). We define a novel matrix decomposition, the WAX decomposition, that allows information-lossless processing within our proposed framework, and we use it to obtain the operational limits of the interplay under study. We also look into some of the limitations of the WAX decomposition.

\end{abstract}

\begin{IEEEkeywords}
WAX decomposition, MIMO, Massive MIMO, LIS, WAX decomposition, decentralized processing, linear equalization, MF.
\end{IEEEkeywords}

\section{Introduction}
\label{section:intro}
\IEEEPARstart{M}{ulti-antenna} architectures have been widely employed since they were first introduced in the 1990s and they still remain a popular research topic. The main reason is the enormous improvements in data rate and reliability coming from exploiting space-division multiplexing and diversity. Current research and development on multi-antenna architectures is trending towards scaling up the number of antennas so as to increase the spatial resolution, thus increasing the spectral efficiency by serving several users in the same time-frequency resource. Furthermore, the exploitation of millimeter-wave spectrum in modern communications \cite{5G} also justifies the increase in the number of antennas. The reason is that the huge path-loss associated to these frequencies when the electrical size of the antennas is kept constant needs to be overcome by focusing the transmitted energy more effectively \cite{mmW}. Massive multiple-input multiple-output (MIMO) \cite{marzetta,rusek} and large intelligent surface (LIS) \cite{husha_data} are some examples of the trend towards increasing the number of antennas, where massive MIMO considers base stations (BSs) with hundreds of antennas while LIS extends this concept even further by considering whole walls of electromagnetically active material.

Massive MIMO is already a reality and several prototypes have been developed and tested, such as \cite{lumami,argos,bigstation}. In the prototypes presented in \cite{lumami,bigstation}, the use of centralized processing leads to huge data-rates between the antennas and the central processing unit (CPU), which limits the scalability of the system as the number of antennas grows. This fact is also noticed in \cite{argos}, which sacrifices performance by relying on simple decentralized beam-forming to favor scalability. The scalability issue is likely to be exacerbated if we consider LIS, where we can think of practical deployments consisting of walls equipped with an even larger number of antennas than massive MIMO (the continuous surfaces are approximately equivalent to the discrete surfaces when the sampling is dense enough, as observed in \cite{husha_data,lis_hw_imp}). Other technologies that are gaining popularity and are likely to face scalability issues include cell-free massive MIMO \cite{cell_free,sc_cell_free,zhang_prosp}, or intelligent reflecting surfaces (IRS) \cite{wu_irs,chen_irs,zhang_prosp}. We will base our study in a general multi-antenna architecture so that it can be easily extended to more specific applications, such as the ones previously mentioned.

There is a current trend towards more decentralized architectures \cite{cavallaro,larsson,li_tradeoffs,isit_2019,muris,jesus,vtc, zhang, amiri} so as to reduce the information transmitted to the CPU. The idea is to carry out pre-processing of the data at the antenna end (or close to it), so that the CPU does not need to have access to all the information required to decode raw data. Available literature on decentralized massive MIMO proposes a wide range of solutions from fully-decentralized architectures \cite{jesus,muris,vtc, cavallaro}, where channel state information (CSI) does not have to be available at the CPU, to partially decentralized architectures, where some of the processing tasks are distributed, but either full \cite{larsson,amiri} or partial CSI \cite{isit_2019} is available at the CPU. 

In this paper we do not address the problem of decentralized CSI estimation; we assume that perfect CSI is available at the CPU. We rely on the fact that CSI estimation does not limit the overall level of decentralization within our framework since it needs to be carried out only once per coherence block. Thus, CSI estimation takes a minor fraction of the coherence time, and the estimated CSI can be then used for the data phase throughout the rest of the coherence block without affecting the level of decentralization. However, the problem of estimating and sharing CSI in an efficient and scalable way within our framework remains as future work.

In \cite{isit_2019} it is argued that an architecture is decentralized enough if it does not need extra hardware apart from the minimum required during the payload data phase. It also states that the volume of data transferred during the data phase has to be independent of the number of antennas at the base station. However, as happens in \cite{isit_2019,muris,jesus,vtc}, in order to reduce this volume of data (related to the number of connections to the CPU) and make it independent of the number of antennas, each antenna has to provide a number of outputs that scales with the number of users. Note that in a centralized architecture we would have only one output per antenna corresponding to one input to the CPU per antenna. We notice the existence of an interesting trade-off between the number of connections to the CPU and the number of outputs from each antenna.

The main goal of this paper is to study the interplay between level of decentralization and the corresponding increase in decentralized processing complexity for multi-antenna architectures. We measure the level of decentralization as the number of connections to a CPU required during the data phase, and we measure the decentralized processing complexity as the number of multiplications/outputs per antenna (or antenna panel)\footnote{Antenna panel refers to a group of co-located antennas.} required to achieve a given level of decentralization. So as to study this trade-off, we present a general framework for a multi-antenna architecture which allows us to change the level of decentralization and complexity by adjusting some system parameters. In this framework, the antennas are grouped into panels of a given number of antennas (this number can be 1 as in \cite{icc2020}). Distributed processing is applied by applying a linear transformation to the inputs of each antenna panel, which generates a given number of outputs (complexity). The outputs are then combined using a combining module that is connected to the CPU using a fixed number of CPU inputs (level of decentralization).

To the best of our knowledge, the presented trade-off has not been explored in the available literature and the results we present are completely new. In \cite{li_tradeoffs} trade-offs between different decentralized architectures, algorithms, and data precision levels are studied. However, these trade-offs are mainly systematic while we are interested in fundamental trade-offs where information rates are maintained with respect to typical centralized systems. Thus, our framework focuses on complex baseband processing, and we assume that the detection is always performed in a CPU.

This paper extends the work presented in \cite{icc2020}. The list of contributions are summarized next:
\begin{itemize}
    \item We present a novel general framework for multi-antenna systems that allows us to consider different levels of decentralization and complexity by adjusting the system parameters. This general framework accounts for typical centralized architectures, decentralized architectures such as the one presented in \cite{isit_2019}, and hybrid architectures in between those two. In \cite{icc2020}, a similar framework to the one studied in this paper is considered. However, the architecture presented in the current work is more general since it adds the possibility of arranging the antennas into panels.
    \item We define the WAX decomposition, a novel matrix decomposition that allows us to define and exploit the trade-offs within our general framework while achieving information-lossless processing. In \cite{icc2020} the WAX decomposition is already introduced, but in the current work we adapt it to use it in a more general framework. Furthermore, we present novel results on the validity and application of said decomposition, e.g., Theorem~\ref{th:A_randH}.
    \item We present the trade-off in terms of number of multiplications/outputs per antenna panel and number of connections to the CPU for achieving information-lossless processing within our general framework. This trade-off is first studied in \cite{icc2020}, but considering only the less general version of the current framework.
    \item We study through simulations the cost of obtaining simple combining modules that accept WAX decomposition within our general framework.
    \item We study complexity limitations for the combining modules within our general framework that accept WAX decomposition.
    \item We study the information-loss associated to operating within our framework when WAX decomposition is not available.
    \item We present a simple non-optimal solution for determining the distributed processing to be applied whenever WAX decomposition is not available. A deeper research on more effective solutions remains as future work.
\end{itemize}

The rest of the paper is organized as follows. Section~\ref{section:model} presents the general framework under study, as well as the system model and problem formulation. In Section~\ref{section:wax} we present the WAX decomposition, which allows the application of information-lossless processing within our general framework. In Section~\ref{section:sparse_A} we study the problem of defining a simple, but valid, combining network for our general framework. Section~\ref{section:tradeoff} presents some discussion on the resulting trade-offs, and some examples of the usage of WAX decomposition. Section~\ref{section:lossy} explores broadly the case where WAX decomposition is not available and other information-lossy processing has to be applied to maximize the data-rates within our framework. We conclude the paper in Section~\ref{section:conclusions} with a summary of the contributions and future work.

Notation: In this paper, lowercase, bold lowercase and bold uppercase
letters stand for scalars, column vectors and matrices, respectively. When using the mutual information operator, $I(.;.)$, bold uppercase in the sub-scripts refers to random vectors instead of their realizations. The operations $(\cdot)^T$, $(\cdot)^*$ and $(\cdot)^H$ denote transpose, conjugate, and conjugate transpose, respectively. The operation $\mathrm{diag}(\cdot)$ outputs a block diagonal matrix with the input matrices/vectors as the diagonal blocks. The operator $\mathrm{vec}(\cdot)$ transforms a matrix into a vector by concatenating its columns. $\mathbf{I}_i$ corresponds to the identity matrix of size $i$, $\boldsymbol{1}_{i\times j}$ denotes the $i \times j$ all-ones matrix, and $\boldsymbol{0}_{i \times j}$ denotes the $i \times j$ all-zeros matrix. In this paper, a randomly chosen matrix corresponds to a realization of a random matrix whose elements are driven from a continuous probability distribution function.

\section{System Model} %%%%% Introduce CPU %%%%%%%%%%%%%%%%%%%%%%%%%%%%%%%%%%%%%%%%%%%%%%
%%%%%%%%%%%%%%% Introduced ZF and MF %%%%%%%%%%%%%%%%%%%%%%%%%%%%%%%%%%%%%%%%%%%%%%%%%%%%
\label{section:model}
Let us consider $K$ single-antenna users transmitting to an $M$-antenna BS through a narrow-band channel. The $M\times 1$ received complex vector, $\boldsymbol{y}$, can be expressed as
\begin{equation}\label{eq:ul_model}
\boldsymbol{y} = \boldsymbol{H}\boldsymbol{s} + \boldsymbol{n},
\end{equation}
where $\boldsymbol{H}$ is the channel matrix of dimension $M\times K$, $\boldsymbol{s}$ is the $K \times 1$ vector of symbols transmitted by the users, and $\boldsymbol{n}$ is a zero-mean complex white Gaussian noise vector with sample variance $N_0$. The $M$ antennas are divided into panels of $N$ antennas; $M/N=P$ is thus restricted to integer values. Each panel,  $p\in\{1,\dots, P\}$, multiplies the received vector, $\boldsymbol{y}_p = [y_{(p-1)N+1} \, \dots \, y_{p N}]$, by an $L\times N$ matrix, $\boldsymbol{W}_p^H$. The aggregated outputs are combined through a fixed $T\times L P$ matrix, $\boldsymbol{A}^H$. The resulting vector is forwarded to a CPU, which can apply further processing. For our analysis, we will assume that the CPU can multiply the incoming matrix by a $K\times T$ matrix $\boldsymbol{X}^H$ to be able to express the equivalent matrix in the form of already known linear equalizers. However, our main interest is to maximize the information rate at which the users transmit to the CPU, so the last step is not required for the analysis since it will not increase this information rate (recall the data-processing inequality \cite{inf_th}). Also, we define the previous matrices using conjugate transpose so as to ease upcoming notation. Fig.~\ref{fig:gen_arch} shows a block diagram of the general framework under study.

\begin{figure}[h]
	\centering
	\includegraphics[scale=0.53]{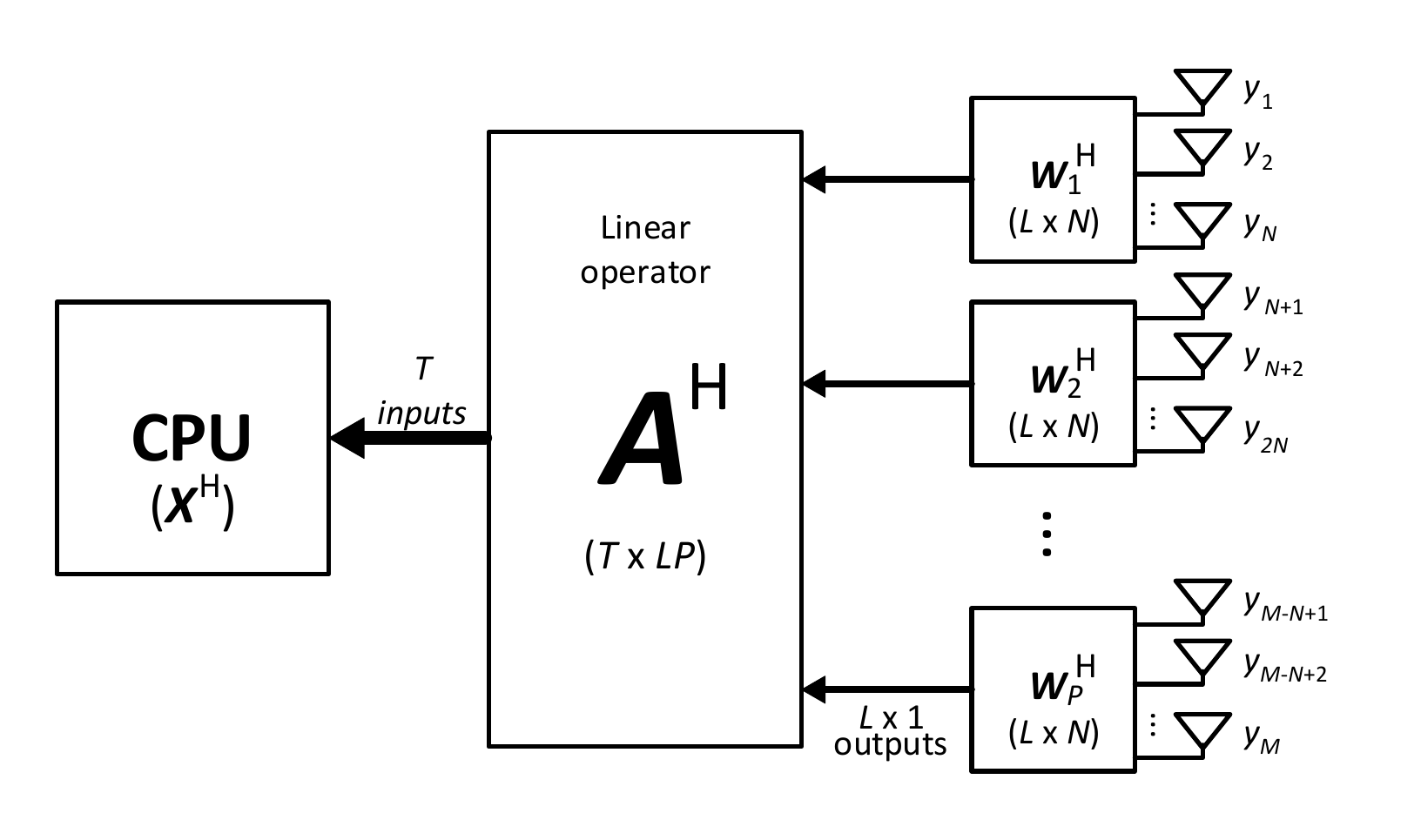}
\caption{General framework considered in this paper.}
	\label{fig:gen_arch}
\end{figure}

The post-processed vector can be expressed as
\begin{equation}\label{eq:z_proc}
\boldsymbol{z} = \boldsymbol{X}^H \boldsymbol{A}^H \boldsymbol{W}^H \boldsymbol{y},
\end{equation}
where $\boldsymbol{W}$ is an $M \times L P$ block diagonal matrix of the form
\begin{equation}\label{eq:w}
    \boldsymbol{W} = \mathrm{diag}\left(\boldsymbol{W}_1,\boldsymbol{W}_2,\dots,\boldsymbol{W}_P\right).
\end{equation}
We assume that the matrices $\boldsymbol{W}_p$, and $\boldsymbol{X}$ can be tuned for every channel realization, while the matrix $\boldsymbol{A}$ is fixed. We can interpret $\boldsymbol{A}$ as the matrix associated to a hardware combining network that can be predesigned, but it cannot be changed once deployed. Table \ref{table:param} shows a classification of the different system parameters considered within our framework.

For tractability we assume that the channel matrix, $\boldsymbol{H}$, is perfectly available at the BS. Thus, we can obtain $\boldsymbol{W}_p$, and $\boldsymbol{X}$ as a function of said matrix. However, the consequences of having an error in the estimation of $\boldsymbol{H}$ due to imperfect CSI would not be enhanced by the framework under study. In fact, the information-loss associated to having imperfect CSI within our framework would not differ from the case of having imperfect CSI in a typical centralized architecture. The reason is that we could still apply an approximation of the spatially-matched filter (MF) within our framework using the imperfectly estimated channel matrix, as we will be able to understand from the upcoming analysis.

\begin{table}
\centering
\caption{System parameters}
\label{table:param}
\scalebox{1.3}{
\begin{tabular}{ |c|c| } 
\hline
 Given parameters & Trade-off parameters \\
 $M$, $K$ & $L$, $T$, $N$, $P$ \\
\hline
Tunable parameters & Parameters fixed by design \\
$\boldsymbol{W}$, $\boldsymbol{X}$ & $\boldsymbol{A}$\\
\hline
\end{tabular}}
\end{table}

For a fixed $\boldsymbol{A}$, we are interested in maximizing the information rate at which the users can transmit, i.e., we would like to solve the maximization problem
\begin{equation}\label{eq:maximize}
\begin{aligned}
& \underset{\boldsymbol{X},\{\boldsymbol{W}_p\}_{p=1}^P}{\text{maximize}}
& & I_{\boldsymbol{Z},\boldsymbol{S}}(\boldsymbol{z}; \boldsymbol{s}).
\end{aligned}
\end{equation}
More specifically, we will explore the cases where the maximization results in $I_{\boldsymbol{Z},\boldsymbol{S}}(\boldsymbol{z},\boldsymbol{s})=I_{\boldsymbol{Y},\boldsymbol{S}}(\boldsymbol{y},\boldsymbol{s})$. From the data-processing inequality \cite{inf_th} we have
\begin{equation}
\begin{aligned}
    I_{\boldsymbol{Z},\boldsymbol{S}}(\boldsymbol{z};\boldsymbol{s})&\leq I_{\boldsymbol{Y},\boldsymbol{S}}(\boldsymbol{A}^H\boldsymbol{W}^H\boldsymbol{y};\boldsymbol{s}) \\
    &\leq I_{\boldsymbol{Y},\boldsymbol{S}}(\boldsymbol{y};\boldsymbol{s}).
\end{aligned}
\end{equation}
This means that the application of $\boldsymbol{X}^H$ at the CPU cannot possibly increase the information rate and, as we mentioned before, it is just a manipulation to adapt the dimensions. Furthermore, assuming $\boldsymbol{s}\sim \mathcal{CN}(\boldsymbol{0}_{K\times 1},P_s \mathbf{I}_K)$ so that the mutual information is maximized (and thus coincides with the capacity), we have \cite{mimo}
\begin{equation}
    I_{\boldsymbol{Y},\boldsymbol{S}}(\boldsymbol{y};\boldsymbol{s})=\log \det \left( \mathbf{I}_M+\frac{P_s}{N_0}\boldsymbol{H}\boldsymbol{H}^H \right).
\end{equation}
Therefore, we can state that, if we are able to achieve
\begin{equation}
I_{\boldsymbol{Z},\boldsymbol{S}}(\boldsymbol{z};\boldsymbol{s})=I_{\boldsymbol{Y},\boldsymbol{S}}(\boldsymbol{y};\boldsymbol{s}),
\end{equation}
the information rate is maximized and the equivalent processing is information-lossless.

One of the main scopes of this paper is to study the conditions, in terms of constraints on the system parameters, for our framework to be able to perform information-lossless processing. The following lemma will be helpful.

\begin{lem}
\label{lem:cap_WAX}
Considering the presented framework, the equality
\begin{equation}
I_{\boldsymbol{Z},\boldsymbol{S}}(\boldsymbol{z};\boldsymbol{s})=\log \det \left( \mathbf{I}_M+\frac{P_s}{N_0}\boldsymbol{H}\boldsymbol{H}^H \right),
\end{equation}
is fulfilled if and only if we can find a $\boldsymbol{W}$ and $\boldsymbol{X}$ such that
\begin{equation}
\boldsymbol{WAX}= \boldsymbol{H}.
\end{equation}
\end{lem}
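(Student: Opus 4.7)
The plan is to introduce $\boldsymbol{B}\equiv \boldsymbol{WAX}$ so that (\ref{eq:z_proc}) becomes $\boldsymbol{z}=\boldsymbol{B}^H\boldsymbol{H}\boldsymbol{s}+\boldsymbol{B}^H\boldsymbol{n}$, and to reduce the information-losslessness condition to a subspace inclusion on $\mathrm{range}(\boldsymbol{B})$ that translates directly into the matrix equation $\boldsymbol{WAX}=\boldsymbol{H}$. The argument splits in the standard way into the two implications.

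For the ``if'' direction, substituting $\boldsymbol{WAX}=\boldsymbol{H}$ gives $\boldsymbol{z}=\boldsymbol{H}^H\boldsymbol{y}$, the spatially-matched filter output. Because $\boldsymbol{n}$ is white Gaussian, the Fisher--Neyman factorization theorem immediately shows that $\boldsymbol{H}^H\boldsymbol{y}$ is a sufficient statistic for $\boldsymbol{s}$, hence $I_{\boldsymbol{Z},\boldsymbol{S}}(\boldsymbol{z};\boldsymbol{s})=I_{\boldsymbol{Y},\boldsymbol{S}}(\boldsymbol{y};\boldsymbol{s})$, which by the capacity formula already stated in the text equals $\log\det(\mathbf{I}_M+(P_s/N_0)\boldsymbol{H}\boldsymbol{H}^H)$.

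For the ``only if'' direction, I would compute the mutual information for a generic $\boldsymbol{B}=\boldsymbol{WAX}$. Assuming first that $\boldsymbol{B}$ has full column rank $K$, whitening the effective noise $\boldsymbol{B}^H\boldsymbol{n}$ by $(\boldsymbol{B}^H\boldsymbol{B})^{-1/2}$ and applying the Gaussian MIMO capacity formula yields
\begin{equation*}
I_{\boldsymbol{Z},\boldsymbol{S}}(\boldsymbol{z};\boldsymbol{s}) = \log\det\!\left(\mathbf{I}_K + \tfrac{P_s}{N_0}\boldsymbol{H}^H\boldsymbol{\Pi}_{\boldsymbol{B}}\boldsymbol{H}\right),
\end{equation*}
where $\boldsymbol{\Pi}_{\boldsymbol{B}}=\boldsymbol{B}(\boldsymbol{B}^H\boldsymbol{B})^{-1}\boldsymbol{B}^H$ is the orthogonal projector onto $\mathrm{range}(\boldsymbol{B})$. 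Sylvester's identity rewrites the target capacity as $\log\det(\mathbf{I}_K+(P_s/N_0)\boldsymbol{H}^H\boldsymbol{H})$, and since $\mathbf{I}_M-\boldsymbol{\Pi}_{\boldsymbol{B}}$ is positive semidefinite, equality of the two determinants forces $\boldsymbol{H}^H\boldsymbol{\Pi}_{\boldsymbol{B}}\boldsymbol{H}=\boldsymbol{H}^H\boldsymbol{H}$, i.e., $\boldsymbol{\Pi}_{\boldsymbol{B}}\boldsymbol{H}=\boldsymbol{H}$. Every column of $\boldsymbol{H}$ therefore lies in $\mathrm{range}(\boldsymbol{WAX})$, so $\boldsymbol{H}=\boldsymbol{WAX}\boldsymbol{D}$ for some $K\times K$ matrix $\boldsymbol{D}$; replacing the tunable $\boldsymbol{X}$ with $\boldsymbol{X}\boldsymbol{D}$ (still a valid $T\times K$ matrix, and leaving the block-diagonal $\boldsymbol{W}$ untouched) delivers the desired factorization $\boldsymbol{WAX}=\boldsymbol{H}$.

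The main obstacle I foresee is the rank-deficient case $\mathrm{rank}(\boldsymbol{B})<K$, in which the effective noise covariance $N_0\boldsymbol{B}^H\boldsymbol{B}$ is singular and the direct whitening step fails. This is handled by taking a compact SVD $\boldsymbol{B}^H=\boldsymbol{V}_r\boldsymbol{\Sigma}_r\boldsymbol{U}_r^H$ of rank $r<K$ and observing that $\boldsymbol{z}$ is a left-invertible linear function of the reduced statistic $\boldsymbol{U}_r^H\boldsymbol{y}$, whose noise has the non-degenerate covariance $N_0\mathbf{I}_r$; applying the same projector-based identity to $\boldsymbol{U}_r^H\boldsymbol{y}$ recovers exactly the subspace condition $\boldsymbol{\Pi}_{\boldsymbol{B}}\boldsymbol{H}=\boldsymbol{H}$, after which the rest of the proof goes through unchanged.
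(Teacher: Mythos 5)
Your proof is correct, but it takes a genuinely different route from the paper's. The paper argues at the level of sufficient statistics: it invokes the (unproved, though standard) fact that every sufficient statistic for $\boldsymbol{s}$ in the Gaussian linear model is an invertible transform $\widetilde{\boldsymbol{X}}\boldsymbol{H}^H\boldsymbol{y}$ of the matched-filter output, so that information-losslessness of $\boldsymbol{X}^H\boldsymbol{A}^H\boldsymbol{W}^H\boldsymbol{y}$ reduces to the identity $\boldsymbol{X}^H\boldsymbol{A}^H\boldsymbol{W}^H=\widetilde{\boldsymbol{X}}\boldsymbol{H}^H$, after which $\widetilde{\boldsymbol{X}}^{-1}$ is absorbed into $\boldsymbol{X}$. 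You instead compute $I_{\boldsymbol{Z},\boldsymbol{S}}(\boldsymbol{z};\boldsymbol{s})$ explicitly: whitening the effective noise yields the projector expression $\log\det\bigl(\mathbf{I}_K+\tfrac{P_s}{N_0}\boldsymbol{H}^H\boldsymbol{\Pi}_{\boldsymbol{B}}\boldsymbol{H}\bigr)$ with $\boldsymbol{B}=\boldsymbol{WAX}$, and strict monotonicity of $\log\det$ under the Loewner order forces $\boldsymbol{\Pi}_{\boldsymbol{B}}\boldsymbol{H}=\boldsymbol{H}$, i.e., $\mathrm{range}(\boldsymbol{H})\subseteq\mathrm{range}(\boldsymbol{WAX})$, after which the $K\times K$ factor $\boldsymbol{D}$ is absorbed into $\boldsymbol{X}$ exactly as the paper absorbs $\widetilde{\boldsymbol{X}}^{-1}$. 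Your version is longer but more self-contained: it avoids the appeal to the classification of sufficient statistics (and the implicit identification of ``achieves the mutual information of $\boldsymbol{y}$ under the Gaussian input'' with ``is sufficient''), and it explicitly treats the rank-deficient case where the effective noise covariance is singular, which the paper's argument glosses over. The paper's approach buys brevity and makes the matched-filter interpretation of \eqref{eq:mf_xaw} immediate; yours buys rigor and generality at the cost of the SVD detour.
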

\begin{proof}
Let $R(\boldsymbol{y})$ be any sufficient statistic for $\boldsymbol{s}$ (which means that it is information-lossless \cite{inf_th}). Then, for any $S\geq K$ there exists a full-rank $S\times K$ matrix $\widetilde{\boldsymbol{X}}$ such that
\begin{equation}
    R(\boldsymbol{y}) = \widetilde{\boldsymbol{X}}\boldsymbol{H}^H \boldsymbol{y}.
\end{equation}
Therefore, for $S=K$, we have that $\boldsymbol{X}^H \boldsymbol{A}^H \boldsymbol{W}^H \boldsymbol{y}$ is a sufficient statistic if and only if
\begin{equation}
    \boldsymbol{X}^H \boldsymbol{A}^H \boldsymbol{W}^H \boldsymbol{y} = \widetilde{\boldsymbol{X}}\boldsymbol{H}^H \boldsymbol{y},
\end{equation}
which leads to
\begin{equation}
    \widetilde{\boldsymbol{X}}^{-1}\boldsymbol{X}^H \boldsymbol{A}^H \boldsymbol{W}^H  = \boldsymbol{H}^H,
\end{equation}
but since $\widetilde{\boldsymbol{X}}^{-1}$ can be absorbed by $\boldsymbol{X}$, we have
\begin{equation}\label{eq:mf_xaw}
    \boldsymbol{X}^H \boldsymbol{A}^H \boldsymbol{W}^H  = \boldsymbol{H}^H,
\end{equation}
which proves the lemma by taking the conjugate transpose.
\end{proof}

Lemma~\ref{lem:cap_WAX} gives an important result to understand when the maximization in \eqref{eq:maximize} achieves information-lossless processing. This lemma will be the basis in Section~\ref{section:wax} for solving the maximization problem \eqref{eq:maximize} whenever information-lossless processing is available within our framework. Note that \eqref{eq:mf_xaw} would correspond to applying MF within our framework, which is well known to be an information-lossless transformation if optimum processing is applied thereafter.
%This framework allows to contemplate any linear equalization by selecting the different parameters accordingly, while allowing to have some design freedom, e.g., to exploit the trade-offs between complexity and level of decentralization. 

\subsection{Notes on the downlink scenario}
Throughout this paper we focus on the uplink scenario for improved clarity. However, this work can straightforwardly be extended to a downlink scenario. We next provide a few details.

Let us assume channel reciprocity, which eases notation and remarks the equivalence with the uplink scenario.\footnote{In case of non-reciprocity, the presented framework would still be valid as long as the base station has access to the downlink channel matrix.} This means that, given the uplink equation \eqref{eq:ul_model}, the corresponding downlink equation for the vector received by the users is as follows
\begin{equation}\label{eq:dl_model}
\boldsymbol{y}_\mathrm{d} = \boldsymbol{H}^T\boldsymbol{z}_\mathrm{d} + \boldsymbol{n}_\mathrm{d},
\end{equation}
where $\boldsymbol{y}_\mathrm{d}$ is now a $K \times 1$ vector with entries associated to the complex baseband signal seen by each user, $\boldsymbol{z}_\mathrm{d}$ is the $M\times 1$ precoded vector sent by the antennas, and $\boldsymbol{n}_\mathrm{d}$ is the corresponding noise vector. We assume that the same framework as in Fig.~\ref{fig:gen_arch} applies for the downlink, but changing the arrows from left to right. This implies that the linear operators (including the combining module) are assumed to be able to use inputs as outputs, and vice versa. The precoded vector would then be defined as
\begin{equation}
\boldsymbol{z}_\mathrm{d} = \W^*\A^*\X^*\boldsymbol{s}_\mathrm{d},
\end{equation}
where $\boldsymbol{s}_\mathrm{d}$ is now a $K\times 1$ vector with complex entries associated to the signals intended for each user. In this case, Lemma~\ref{lem:cap_WAX} does not apply unless we assume that the users can collaborate with each other. However, we can achieve lossless precoding with respect to standard centralized linear precoding schemes, since we can still apply typical precoding schemes within our framework such as MF, zero-forcing (ZF) or minimum mean squared error (MMSE). Furthermore, with the assumption of channel reciprocity in place, the $\W$ matrices could be kept fixed from the uplink processing ($\A$ is still fixed by design) and $\X$ can be adapted to apply the desired precoding scheme. In the case of non-reciprocity of the channel, both $\X$ and $\W$ have to be recomputed for applying the desired precoding based on the downlink channel.

\subsection{Previously studied architectures within our framework}
There are several multi-antenna architectures that could be represented within our framework, which further motivates our study. In this case, what we mean with "represent an architecture within our framework" is  that there is a combination of design variables within our framework that gives the same processing. The most obvious architecture that fulfills this is a typical centralized $M$-antenna architecture, e.g., centralized massive MIMO systems. In this case all the antennas are directly connected to a CPU, which corresponds to having one antenna per panel, and all antennas connected directly to the CPU, i.e., $N=1$, $L=1$, $\boldsymbol{W}_p=1$ (scalar), $T=M$, $\boldsymbol{A}=\mathbf{I}_M$. This architecture is depicted in Fig.~\ref{fig:arch_old} (left).

Another architecture that can be represented within our framework is the decentralized massive MIMO architecture from \cite{isit_2019}. In this case, there is also one antenna per panel, but antenna $m$ multiplies its input by the $K \times 1$ local channel vector, $\boldsymbol{h}_m$, and the result is summed over all the antennas so that the size of the vector transmitted to the CPU coincides with the number of users, $K$. Thus, this architecture can be represented within our framework by setting $N=1$, $L=K$, $\boldsymbol{W}_p=\boldsymbol{h}_p$ (vector), $T=K$, $\boldsymbol{A}=\left[ \mathbf{I}_K \;\mathbf{I}_K \,\dots \,\mathbf{I}_K \right]^T$, which corresponds to MF if no $\boldsymbol{X}$ is applied at the CPU. This architecture is depicted in Fig.~\ref{fig:arch_old} (right). In \cite{isit_2019} it is claimed that, for a system to be decentralized, the volume of data transmitted to the CPU during the data phase should not scale with $M$. However, the proposed solution reduces this scaling to $K$ by increasing the number of multiplications/outputs per antenna to $K$, which increases the decentralized processing complexity. Our framework allows us to freely adjust these parameters.

\begin{figure*}[h]
	\centering
	\includegraphics[scale=0.53]{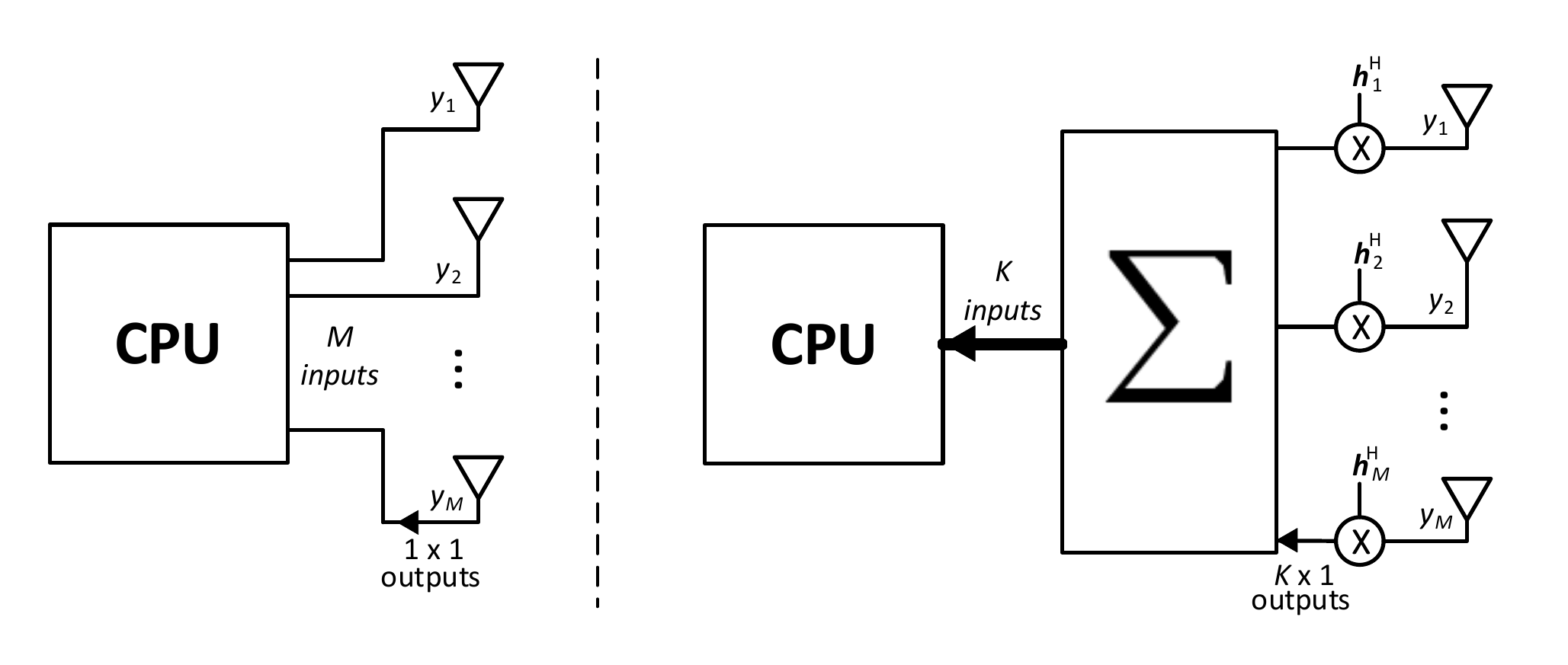}
\caption{Architecture of centralized massive MIMO (left), and decentralized massive MIMO from \cite{isit_2019} (right).}
	\label{fig:arch_old}
\end{figure*}

Comparing the two architectures from Fig.~\ref{fig:arch_old}, where in both cases information-lossless processing can be applied, we immediately identify a trade-off between the number of connections to the CPU and the number of outputs per antenna panel as depicted in Fig.~\ref{fig:outputs} (left). Note that, in this case, we have a single antenna per panel, i.e., $N=1$. In our framework, these two quantities, which are traded-off against each other, correspond to $T$ and $L$, respectively. We can also see the trade-off in terms of multiplications per antenna as in Fig.~\ref{fig:outputs} (right), which can lead to a fairer comparison if we consider panels with more than 1 antenna. The main reason is that, for $N>1$, the number of outputs can be maintained \cite{feedforward}, but the complexity still increases due to a higher number of multiplications.

%%%%%%%%%%%% VTC PAPER SHOULD WE CLARIFY THAT IT TALKS ABOUT A DIFFERENT IDEA OF PANEL
If we look into other decentralized architectures, such as the ones presented in \cite{feedforward, jesus, muris, li_tradeoffs, vtc}, we can also represent them within our framework and identify the trade-off between inputs to the CPU and multiplications/outputs per panel (in these cases the trade-off may not lead to information-lossless processing). For example, the architectures from \cite{jesus, muris, vtc} would all lead to the decentralized point in the plots from Fig.~\ref{fig:outputs} since they all apply $K\times 1$ filters in each antenna during the data phase and generate $K$ outputs per panel during the data phase ($N=1$ in these architectures). The same is true in \cite{feedforward} if we consider outputs per panel, although if we look at multiplications per panel we should scale the number of outputs by the number of antennas per panel, $N> 1$. Note that, even though \cite{feedforward} presents two different architectures with different level of decentralization, within our framework they lead to the same value of the trade-off; the only difference is that in the fully-decentralized architecture $\boldsymbol{X}$ would not be applied. In \cite{li_tradeoffs}, however, we would not be able to represent the fully-decentralized architecture within our framework since detection is performed before reaching the CPU, while our framework only considers complex baseband processing.

\begin{figure}
     \centering
     \begin{subfigure}[b]{0.23\textwidth}
         \centering
         \includegraphics[scale=0.34]{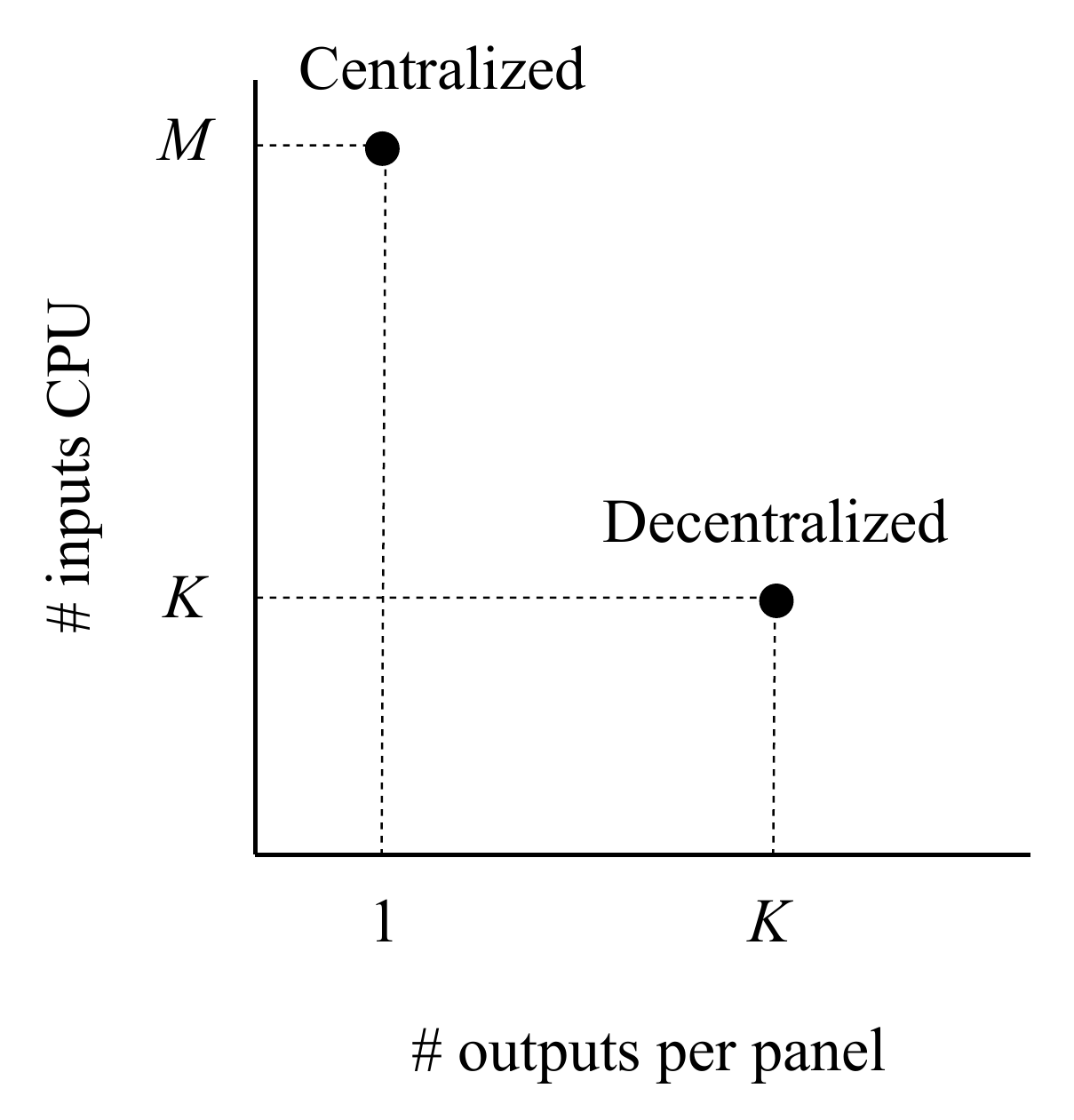}
     \end{subfigure}
     \hfill
     \begin{subfigure}[b]{0.23\textwidth}
         \centering
         \includegraphics[scale=0.34]{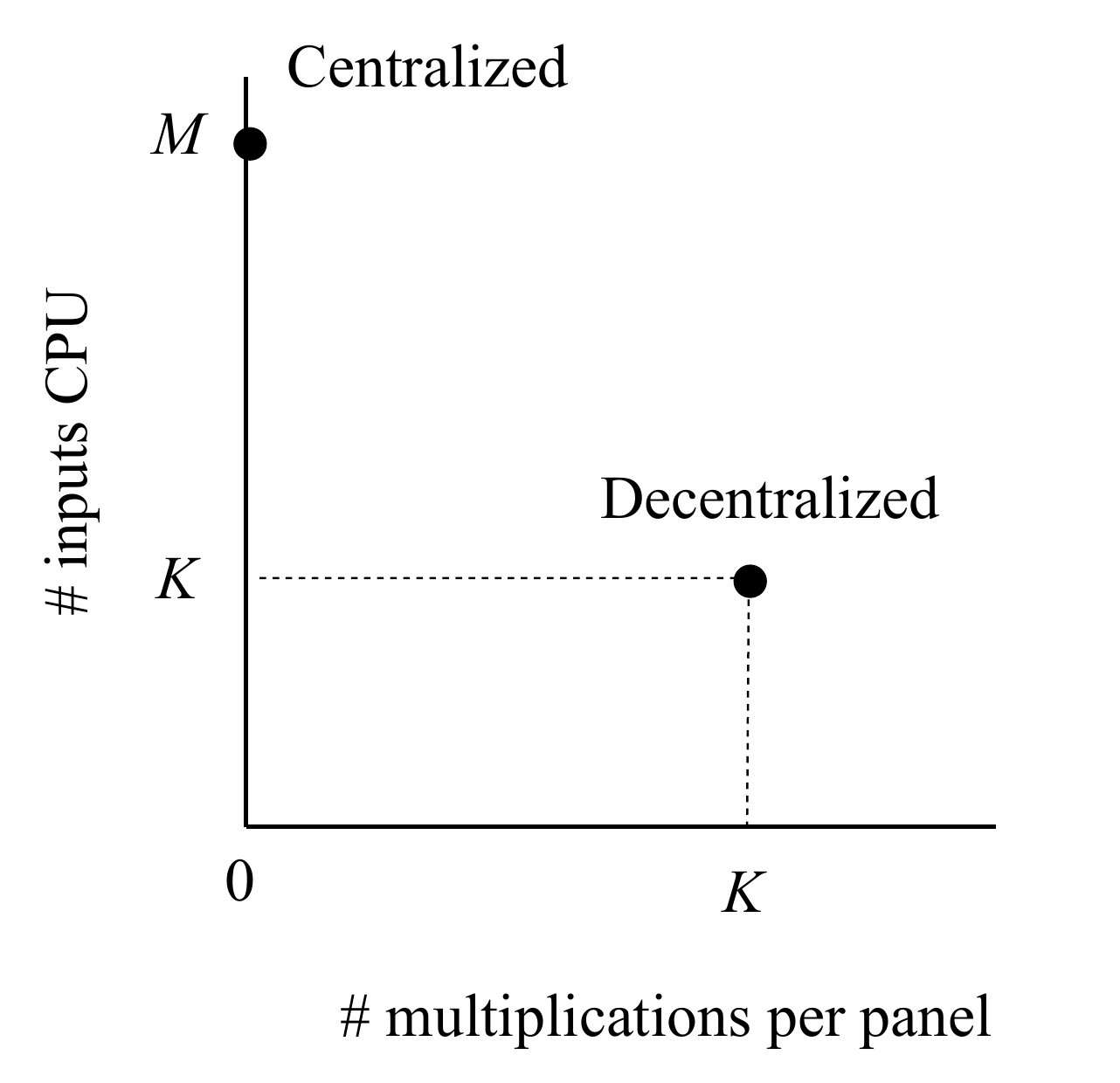}
     \end{subfigure}
        \caption{Number of inputs to the CPU (related to $T$) v.s. number of outputs (left)/number of multiplications (right) per antenna (related to $L$) for the architectures depicted in Fig.~\ref{fig:arch_old}. Available literature does not address the behavior in between 1 and $K$ outputs per antenna.}
        \label{fig:outputs}
\end{figure}

%%%%%%%%% Continue Review Here
\section{WAX Decomposition}
\label{section:wax}
% relating to the "mysterious" section title
%% Added by Fredrik
Given an arbitrary $M \times K$ matrix, $\boldsymbol{H}$, and a fixed $L P \times T$ matrix, $\boldsymbol{A}$, we define the WAX decomposition of $\boldsymbol{H}$, whenever it exists, as
\begin{equation} \label{WAX}
  \boldsymbol{H} =  \boldsymbol{W}\boldsymbol{A}\boldsymbol{X}.
\end{equation}
This decomposition relates directly to Lemma~\ref{lem:cap_WAX}, i.e., $\boldsymbol{W}$ has the structure defined in (\ref{eq:w}) and $\boldsymbol{X}$ is a $T \times K$ matrix. Furthermore, the existence of this matrix decomposition for any channel realization would assure that our framework can apply information-lossless processing.

The next lemma will allow us to restrict our study to the case where $N=L$ since this case can span a major part of the general problem through some manipulation.
\begin{lem}
\label{lem:N_L}
Assume $M/L$ and $L/N$ are integer valued. Consider some fixed $T \times K$ matrix $\boldsymbol{X}$, an $M \times K$ matrix $\boldsymbol{H}$, an $M \times T$ matrix $\widetilde{\boldsymbol{A}}$, and an $L P \times T$ matrix $\boldsymbol{A} = \boldsymbol{T}_{L,N}\widetilde{\boldsymbol{A}}$, with
\begin{equation}\label{eq:WAX_N}
    \boldsymbol{T}_{L,N}=\mathbf{I}_{\frac{M}{L}}\otimes \left( \boldsymbol{1}_{\frac{L}{N} \times 1} \otimes \mathbf{I}_L \right).
\end{equation}
Then, the statement
\begin{equation}
    \exists \boldsymbol{W}\, : \, \boldsymbol{H} =  \boldsymbol{W}\boldsymbol{A}\boldsymbol{X},
\end{equation}
where $\boldsymbol{W}$ is defined as in \eqref{eq:w}, is true if and only if
\begin{equation}\label{eq:WAX_L}
    \exists \widetilde{\boldsymbol{W}}\, : \, \boldsymbol{H} =  \widetilde{\boldsymbol{W}}\widetilde{\boldsymbol{A}}\boldsymbol{X},
\end{equation}
where $\widetilde{\boldsymbol{W}}=\mathrm{diag}(\widetilde{\boldsymbol{W}}_1,\dots, \widetilde{\boldsymbol{W}}_{M/L})$, and $\widetilde{\boldsymbol{W}}_p$ are $L \times L$ matrices $\forall p\in \{1,\dots, M/L \}$, i.e., $\widetilde{\boldsymbol{W}}$ has the same structure as \eqref{eq:w} for $N=L$.
\end{lem}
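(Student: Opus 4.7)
The plan is to show that the linear map $\boldsymbol{W} \mapsto \boldsymbol{W}\boldsymbol{T}_{L,N}$ is a bijection from the set of $M \times LP$ matrices of the form \eqref{eq:w} (block diagonal with $N \times L$ diagonal blocks) onto the set of $M \times M$ block-diagonal matrices with $L \times L$ diagonal blocks. Once this bijection is established, substituting $\boldsymbol{A} = \boldsymbol{T}_{L,N}\widetilde{\boldsymbol{A}}$ gives $\boldsymbol{W}\boldsymbol{A}\boldsymbol{X} = (\boldsymbol{W}\boldsymbol{T}_{L,N})\widetilde{\boldsymbol{A}}\boldsymbol{X}$, so the existence of a $\boldsymbol{W}$ of the required structure with $\boldsymbol{H} = \boldsymbol{W}\boldsymbol{A}\boldsymbol{X}$ is equivalent to the existence of an admissible $\widetilde{\boldsymbol{W}}$ satisfying \eqref{eq:WAX_L}.

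The central calculation is to evaluate $\boldsymbol{W}\boldsymbol{T}_{L,N}$ by exploiting the fact that $\boldsymbol{T}_{L,N}$ is itself block diagonal with $M/L$ identical diagonal blocks equal to $\boldsymbol{B} := \boldsymbol{1}_{L/N \times 1} \otimes \mathbf{I}_L$. Regrouping the $P = M/N$ blocks of $\boldsymbol{W}$ into $M/L$ super-blocks of $L/N$ consecutive $\boldsymbol{W}_p$'s aligns the column partition of $\boldsymbol{W}$ with the row partition of $\boldsymbol{T}_{L,N}$, so the product is block diagonal with $M/L$ blocks of size $L \times L$. A direct calculation shows that multiplying $\mathrm{diag}(\boldsymbol{W}_{r+1},\ldots,\boldsymbol{W}_{r+L/N})$ by the vertical stack $\boldsymbol{B}$ of $L/N$ copies of $\mathbf{I}_L$ simply concatenates those blocks vertically, so the $p$-th diagonal block of $\boldsymbol{W}\boldsymbol{T}_{L,N}$ is
\[
\widetilde{\boldsymbol{W}}_p = \begin{pmatrix} \boldsymbol{W}_{(p-1)L/N+1} \\ \vdots \\ \boldsymbol{W}_{pL/N} \end{pmatrix},
\]
which has exactly the $L \times L$ block structure required of $\widetilde{\boldsymbol{W}}$.

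To close the bijection, I would observe that the inverse map is explicit: given any admissible $\widetilde{\boldsymbol{W}}$, split each $L \times L$ diagonal block $\widetilde{\boldsymbol{W}}_p$ into $L/N$ consecutive horizontal strips of height $N$ and identify strip $i$ with $\boldsymbol{W}_{(p-1)L/N+i}$; this produces a unique $\boldsymbol{W}$ of the form \eqref{eq:w} with $\boldsymbol{W}\boldsymbol{T}_{L,N} = \widetilde{\boldsymbol{W}}$. Composing the two directions with the identity $\boldsymbol{W}\boldsymbol{A}\boldsymbol{X} = \widetilde{\boldsymbol{W}}\widetilde{\boldsymbol{A}}\boldsymbol{X}$ (valid whenever $\widetilde{\boldsymbol{W}} = \boldsymbol{W}\boldsymbol{T}_{L,N}$) yields the claimed equivalence.

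The main obstacle is purely notational: keeping the nested block and Kronecker partitions aligned so that $\boldsymbol{W}\boldsymbol{T}_{L,N}$ is unwound cleanly. There is no rank or invertibility condition at play, which is consistent with the lemma being a structural reformulation that enlarges the column dimension of each local panel matrix without altering the expressive power of the decomposition; this is what justifies treating the case $N = L$ as canonical in the subsequent analysis.
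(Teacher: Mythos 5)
Your proof is correct and follows essentially the same route as the paper's: both directions are obtained from the bijection $\boldsymbol{W} \mapsto \boldsymbol{W}\boldsymbol{T}_{L,N}$ between the block-diagonal matrices of the form \eqref{eq:w} and the block-diagonal matrices with $L\times L$ blocks, combined with the identity $\boldsymbol{W}\boldsymbol{A}\boldsymbol{X} = (\boldsymbol{W}\boldsymbol{T}_{L,N})\widetilde{\boldsymbol{A}}\boldsymbol{X}$. Your explicit computation that the $p$-th diagonal block of $\boldsymbol{W}\boldsymbol{T}_{L,N}$ is the vertical concatenation of $\boldsymbol{W}_{(p-1)L/N+1},\dots,\boldsymbol{W}_{pL/N}$ is, if anything, a cleaner presentation of the inverse construction than the paper's row-extraction argument.
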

\begin{proof}
The right implication can be immediately seen by choosing $\widetilde{\boldsymbol{W}}=\boldsymbol{W}\boldsymbol{T}_{L,N}$ and noticing that this matrix already has the required structure.

For the left implication it is enough to check that, for any $\widetilde{\boldsymbol{W}}$, there exists some $\boldsymbol{W}$ that fulfills $\boldsymbol{W}\boldsymbol{T}_{L,N} = \widetilde{\boldsymbol{W}}$, both $\boldsymbol{W}$ and $\widetilde{\boldsymbol{W}}$ with their corresponding structure. In this case, there is no linear transformation that gets $\boldsymbol{W}$ from $\widetilde{\boldsymbol{W}}$, but we can construct it as follows. Let us take $\widehat{\boldsymbol{W}}= \widetilde{\boldsymbol{W}}\boldsymbol{T}_{L,M}$, with $\boldsymbol{T}_{L,M}=[\mathbf{I}_L,\dots, \mathbf{I}_L]^T$. We can then select the diagonal blocks of $\boldsymbol{W}$ to be $\boldsymbol{W}_p = [\widehat{\boldsymbol{w}}_{(p-1)N+1}^T, \dots, \widehat{\boldsymbol{w}}_{(p-1)N}^T]^T$, where $\widehat{\boldsymbol{w}}_m$, $m=\{1,\dots,M\}$, are the rows of $\widehat{\boldsymbol{W}}$. Thus, there is a one-to-one mapping between all possible $\widetilde{\boldsymbol{W}}$ and $\boldsymbol{W}$ that fulfills $\boldsymbol{W}\boldsymbol{T}_{L,N} = \widetilde{\boldsymbol{W}}$.
\end{proof}

Lemma~\ref{lem:N_L} also implies that, whenever the assumptions apply, the trade-off between the number of outputs per panel and the number of inputs to the CPU required ($L$ and $T$, respectively) does not depend on the number of antennas per panel ($N$). However, the number of multiplications per panel would scale with $N$. An important constraint of Lemma~\ref{lem:N_L} is that we need to have $L\geq N$, which can be easily checked to be a requirement for having information-lossless processing within our framework unless $L=K$ (leading to the trivial solution of setting $\boldsymbol{W}_p$ as the local MFs).

The rest of this paper assumes $N=L$, unless otherwise stated, due to the intrinsic generality of this case. This means that $\boldsymbol{W}$ in \eqref{WAX} is a square $M \times M$ block diagonal matrix given by \eqref{eq:w}, with $P=M/L$, and containing square $L\times L$ blocks. The physical implication of this restriction is that all the panels in Fig.~\ref{fig:gen_arch} have the same number of antennas and outputs. However, as seen in the proof of Lemma~\ref{lem:N_L}, we can construct almost any other case from the square case by applying some transformations, with the only limitation that $M/L$ and $L/N$ must evaluate to integer values, as well as that $\boldsymbol{A}$ must be constructed using a specific structure. For simplicity, we will use the same notation as in the general case \eqref{WAX}, but considering the new restriction on the dimensions.

\subsection{Solving WAX}
Let us assume for now that there exists the WAX decomposition of $\boldsymbol{H}$, we later investigate when this is the case. We provide next a step-by-step solution for practical computation of \eqref{WAX}, i.e., for obtaining the matrices $\boldsymbol{W}$ and $\boldsymbol{X}$ in \eqref{WAX} based on the current channel realization, $\boldsymbol{H}$, and the fixed combining network, $\boldsymbol{A}$. This step-by-step solution will also be useful to set the ground for our main result on the applicability of WAX decomposition.

Let us express $\boldsymbol{H}=[\boldsymbol{H}_1^T \, \boldsymbol{H}_2^T \, \dots \boldsymbol{H}_P^T]^T$ and $\boldsymbol{A}=[\boldsymbol{A}_1^T \, \boldsymbol{A}_2^T \, \dots \boldsymbol{A}_P^T]^T$, where $\boldsymbol{H}_p$ and $\boldsymbol{A}_p$ $\forall p \in \{1,\dots,P\}$ are $L \times K$ and $L \times T$ blocks, respectively. The following lemma will be useful.
\begin{lem} \label{lemeqv}
For all matrices $\boldsymbol{H}$ satisfying $\mathrm{rank}(\boldsymbol{H}_n)=L$ there exists a block diagonal matrix $\boldsymbol{W}$ and a matrix $\boldsymbol{X}$ such that $\boldsymbol{W}\boldsymbol{A}\boldsymbol{X}=\boldsymbol{H}$ if and only if there exists  a block diagonal invertible matrix $\hat{\boldsymbol{W}}$ with the same form as $\boldsymbol{W}$ that fulfills $\boldsymbol{A}\boldsymbol{X}-\hat{\boldsymbol{W}}\boldsymbol{H}=\boldsymbol{0}_{M\times K}$.
\end{lem}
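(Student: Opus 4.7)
The plan is to exploit the block structure of $\boldsymbol{W}$ (and hence of $\boldsymbol{AX}=\boldsymbol{H}$) so that the equation $\boldsymbol{WAX}=\boldsymbol{H}$ decouples into $P$ independent conditions of the form $\boldsymbol{W}_p\boldsymbol{A}_p\boldsymbol{X}=\boldsymbol{H}_p$, and then to observe that the hypothesis $\mathrm{rank}(\boldsymbol{H}_p)=L$ forces each $\boldsymbol{W}_p$ (which is $L\times L$) to be invertible. Once that invertibility is in place, the two formulations of the lemma are related simply by taking the inverse of $\boldsymbol{W}$ block by block, which preserves the block diagonal structure.

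For the forward implication I would start from an arbitrary pair $(\boldsymbol{W},\boldsymbol{X})$ with $\boldsymbol{W}$ block diagonal and $\boldsymbol{WAX}=\boldsymbol{H}$. Reading off the $p$-th block row gives $\boldsymbol{W}_p(\boldsymbol{A}_p\boldsymbol{X})=\boldsymbol{H}_p$; since the right-hand side has rank $L$ and $\boldsymbol{W}_p$ is $L\times L$, necessarily $\mathrm{rank}(\boldsymbol{W}_p)=L$, so $\boldsymbol{W}_p$ is invertible. Setting $\hat{\boldsymbol{W}}_p=\boldsymbol{W}_p^{-1}$ and assembling the $\hat{\boldsymbol{W}}_p$'s into a block diagonal $\hat{\boldsymbol{W}}$, one gets $\boldsymbol{A}_p\boldsymbol{X}=\hat{\boldsymbol{W}}_p\boldsymbol{H}_p$ for every $p$, i.e.\ $\boldsymbol{AX}-\hat{\boldsymbol{W}}\boldsymbol{H}=\boldsymbol{0}_{M\times K}$, with $\hat{\boldsymbol{W}}=\boldsymbol{W}^{-1}$ trivially invertible.

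The reverse direction is essentially a one-line observation: given $\hat{\boldsymbol{W}}$ invertible and block diagonal with blocks $\hat{\boldsymbol{W}}_p$, its inverse $\boldsymbol{W}:=\hat{\boldsymbol{W}}^{-1}$ inherits the same block diagonal structure (with blocks $\hat{\boldsymbol{W}}_p^{-1}$), so applying it to $\boldsymbol{AX}=\hat{\boldsymbol{W}}\boldsymbol{H}$ yields $\boldsymbol{WAX}=\boldsymbol{H}$ with $\boldsymbol{W}$ of the required form. The argument does not have a real obstacle; the only substantive point is the rank-counting step in the forward direction that upgrades the existence of $\boldsymbol{W}_p$ to its invertibility, and this is what makes the equivalence usable later, because it replaces the bilinear constraint $\boldsymbol{WAX}=\boldsymbol{H}$ (nonlinear in the unknowns) with a single linear constraint on $(\boldsymbol{X},\hat{\boldsymbol{W}})$ subject to an open invertibility condition.
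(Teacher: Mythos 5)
Your proof is correct and follows essentially the same route as the paper's: decompose the equation into the block rows $\boldsymbol{W}_p\boldsymbol{A}_p\boldsymbol{X}=\boldsymbol{H}_p$, use $\mathrm{rank}(\boldsymbol{H}_p)=L$ to force each $L\times L$ block $\boldsymbol{W}_p$ to be invertible, and relate the two formulations via $\hat{\boldsymbol{W}}=\boldsymbol{W}^{-1}$, whose block-diagonal structure is preserved under inversion. Your version is slightly more explicit about the rank-counting step and about why inversion preserves the block structure, but the argument is the same.
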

\begin{proof}
Assume existence of a block diagonal matrix $\boldsymbol{W}$ and a matrix $\boldsymbol{X}$ such that $\boldsymbol{W}\boldsymbol{A}\boldsymbol{X}=\boldsymbol{H}$. This is equivalent to
$$\boldsymbol{W}_p \boldsymbol{A}_p \boldsymbol{X}= \boldsymbol{H}_p, \; 1\leq p \leq P.$$ 
Since, by assumption, $\mathrm{rank}(\boldsymbol{H}_p)=L$, it follows that $\mathrm{rank}(\boldsymbol{W}_p)=L \;\; \forall p$, making the matrix $\boldsymbol{W}$ invertible.

The reverse statement is trivial; if an invertible $\hat{\boldsymbol{W}}$ exists, then we can set $\boldsymbol{W}=\hat{\boldsymbol{W}}^{-1}.$
\end{proof}

For a randomly chosen $\boldsymbol{H}$, the condition $\mathrm{rank}(\boldsymbol{H}_p)=L$ holds with probability 1. We can then compute the WAX decomposition by invoking Lemma~\ref{lemeqv}, which yields the linear system
\begin{equation}\label{eq:AXWH}
    \boldsymbol{A}\boldsymbol{X}-\boldsymbol{W}^{-1}\boldsymbol{H}=\boldsymbol{0}_{M\times K}.
\end{equation}
Using the vectorization operator we get an equivalent linear system of equations
\begin{equation}\label{eq:lin_wax}
\boldsymbol{B}\boldsymbol{u}=\boldsymbol{0}_{MK\times 1},
\end{equation}
where $\boldsymbol{u}$ corresponds to the $(TK+ML)\times 1$ vector of unknowns,
\begin{equation}
\boldsymbol{u}=\begin{bmatrix}
\mathrm{vec}(\boldsymbol{X}) \\
\mathrm{vec}(\boldsymbol{W}_1) \\
\vdots \\
\mathrm{vec}(\boldsymbol{W}_P)
\end{bmatrix},
\end{equation}
and $\boldsymbol{B}$ is an $M K \times (TK+ML)$ block matrix of the form $\boldsymbol{B}=[\boldsymbol{B}_1 \;\boldsymbol{B}_2]$ resulting from the vectorization operation, with
\begin{equation} \label{eq:B1}
\boldsymbol{B}_1 = \mathbf{I}_K \otimes \boldsymbol{A}, \quad \boldsymbol{B}_2 = -(\boldsymbol{H}^\mathrm{T} \otimes \mathbf{I}_M)\mathbf{P}.
\end{equation}
$\mathbf{P}$ is an $M^2\times ML$ block matrix composed of identity matrices, $\boldsymbol{I}_{L}$, separated by rows of zeros so as to disregard the zeros in $\mathrm{vec}(\boldsymbol{W})$. The solution to \eqref{eq:lin_wax} can be found by setting $\boldsymbol{u}$ to be any vector in the null-space of $\boldsymbol{B}$, which will always be non-zero if condition \eqref{eq:cond_wax} is met (as will be seen shortly). Then, we can obtain the corresponding $\boldsymbol{W}^{-1}$ and $\boldsymbol{X}$ from $\boldsymbol{u}$ through inverse vectorization, and we should check that the resulting $\boldsymbol{W}^{-1}$ is full rank so that we can obtain $\boldsymbol{W}$ by taking the matrix inverse. Thus, the complexity of performing the WAX decomposition using the provided method is equivalent to that of finding the null-space of an $M K \times (TK+ML)$ matrix, and inverting $P$ matrices of dimension $L\times L$.

We have now set the ground for presenting our main result on the applicability of WAX decomposition, which is established in the following theorem.

\begin{thm} \label{WAXmain}
Assuming $N=L$, which implies that $P=M/L$ must evaluate to an integer value, fulfilling the inequality
\begin{equation}\label{eq:cond_wax}
T>\max\left(M\frac{K-L}{K},K-1\right)
\end{equation}
assures that, given a fixed randomly chosen $\boldsymbol{A}\in\mathbb{C}^{M\times T}$, a randomly chosen $\boldsymbol{H}\in\mathbb{C}^{M\times K}$ will admit a decomposition of the form \eqref{WAX} with probability 1.
\end{thm}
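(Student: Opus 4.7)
The plan is to recast the existence of a WAX decomposition as a statement about the null space of the matrix $\boldsymbol{B}$ from (\ref{eq:lin_wax})--(\ref{eq:B1}), and then show that under the two inequalities this null space is non-trivial and contains a vector whose associated block-diagonal factor is invertible. By Lemma~\ref{lemeqv}, a WAX decomposition of $\boldsymbol{H}$ exists precisely when $\boldsymbol{B}\boldsymbol{u}=\boldsymbol{0}$ has a non-zero solution for which each diagonal block of the recovered $\hat{\boldsymbol{W}}$ is invertible, after which the true factor is $\boldsymbol{W}=\hat{\boldsymbol{W}}^{-1}$.

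First I would carry out a pure dimension count. The matrix $\boldsymbol{B}$ is $MK \times (TK+ML)$, so its null space has dimension at least $TK+ML-MK = K(T-M)+ML$. Rearranging shows this quantity is positive exactly when $T > M(K-L)/K$, which is the first inequality in the hypothesis; this alone guarantees a non-trivial null space for every $\boldsymbol{A}$ and every $\boldsymbol{H}$ and is the reason the first bound appears in the statement.

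The delicate step is to argue that a generic element of this null space yields an invertible block-diagonal $\hat{\boldsymbol{W}}$. The ``bad'' subset, where some $\det(\hat{\boldsymbol{W}}_p)=0$, is Zariski-closed inside the null space, so it is either the whole null space or a proper subvariety of complex measure zero. The second inequality $T > K-1$ enters here as a necessary rank condition: an invertible $\hat{\boldsymbol{W}}$ forces $\mathrm{rank}(\boldsymbol{A}\boldsymbol{X}) = \mathrm{rank}(\boldsymbol{H}) = K$ for generic $\boldsymbol{H}$, and since $\mathrm{rank}(\boldsymbol{A}\boldsymbol{X})\leq T$ this demands $T\geq K$. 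To rule out that the bad locus coincides with the whole null space, I would exhibit a single explicit tuple $(\boldsymbol{A},\boldsymbol{H},\boldsymbol{W},\boldsymbol{X})$ with $\boldsymbol{WAX}=\boldsymbol{H}$ under the hypotheses -- for instance, a block-identity template reminiscent of the centralized and decentralized architectures already shown to fit the framework -- and then propagate this witness to generic $(\boldsymbol{A},\boldsymbol{H})$ by a standard semicontinuity/density argument.

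The hard part will be exactly this last propagation. The null space of $\boldsymbol{B}$ varies with $(\boldsymbol{A},\boldsymbol{H})$, so one must simultaneously control that it is non-trivial and that it is not entirely swallowed by the determinantal bad locus. The cleanest framing I see is to introduce the incidence variety of admissible tuples $(\boldsymbol{A},\boldsymbol{H},\boldsymbol{W},\boldsymbol{X})$ satisfying $\boldsymbol{WAX}=\boldsymbol{H}$ with invertible diagonal blocks of $\boldsymbol{W}$, and to prove that its projection onto the $(\boldsymbol{A},\boldsymbol{H})$-factor is dominant whenever both inequalities hold; the explicit witness supplies a smooth point of a fibre, and dominance then delivers the probability-one conclusion for randomly drawn $\boldsymbol{A}$ and $\boldsymbol{H}$.
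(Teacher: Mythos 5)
Your proposal follows essentially the same route as the paper's Appendix~A: reduce to the linear system $\boldsymbol{A}\boldsymbol{X}=\hat{\boldsymbol{W}}\boldsymbol{H}$ via Lemma~\ref{lemeqv}, obtain the first bound from the dimension count $TK+ML>MK$ and the second from the rank requirement $\mathrm{rank}(\boldsymbol{A}\boldsymbol{X})\geq K$, and then argue that the determinantal ``bad'' locus is a proper algebraic subset by exhibiting an explicit witness (the paper uses $\boldsymbol{H}$ equal to $K$ columns of $\boldsymbol{A}$, giving $\hat{\boldsymbol{W}}=\mathbf{I}_M$). The only difference is cosmetic: where you invoke an incidence-variety/dominance argument for the final propagation, the paper writes $\det(\hat{\boldsymbol{W}}_p)$ as a polynomial in the free variables of the null space of $\boldsymbol{B}$ (kept at fixed dimension by first showing $\boldsymbol{B}$ is generically full rank) whose coefficients are rational in $(\boldsymbol{A},\boldsymbol{H})$ and not all identically zero.
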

\begin{proof}
See Appendix A.
\end{proof}

An immediate result of Theorem~\ref{WAXmain} is that, since we are only interested in having $L\leq K$ due to its practical implications (otherwise there is no dimensionality reduction compared to local MF from Fig.~\ref{fig:arch_old} right side), then we have $T\geq L$ since $T\geq K$. The case where Theorem~\ref{WAXmain} is not fulfilled is considered in Section~\ref{section:lossy}.

For a randomly chosen $\boldsymbol{A}$, $\boldsymbol{W}^{-1}$ is full rank with probability 1, but note that some specific $\boldsymbol{A}$ matrices may lead to rank deficient $\boldsymbol{W}^{-1}$ even if \eqref{eq:cond_wax} is met. That is, for a poorly chosen matrix $\boldsymbol{A}$, the WAX decomposition of a matrix $\boldsymbol{H}$ cannot be performed. In what follows next we study conditions on $\boldsymbol{A}$ in order for the WAX decomposition to be feasible.

%%%% Note: the theorem 1 applies only to the random A, but how do we recycle here?
\subsection{Studying validity of matrix $\boldsymbol{A}$}
While Theorem~\ref{WAXmain} states that any randomly chosen $\boldsymbol{A}$ works for WAX decomposition, we are, from a practical perspective, interested in $\boldsymbol{A}$ matrices having simple forms (providing low computational complexity); for example, sparse matrices with elements in the set $\{0,1\}$. This would significantly simplify the combining network. However, for such a matrix, Theorem~\ref{WAXmain} no longer applies. Therefore, it is of importance to investigate the exceptions to Theorem~\ref{WAXmain}.\footnote{Note that, in the general case where $N\neq L$, selecting $\boldsymbol{A}$ as in Lemma~\ref{lem:N_L} maintains the overall sparsity properties of $\widetilde{\boldsymbol{A}}$ since the transformation $\boldsymbol{T}_{L,N}$ just replicates the matrix $\widetilde{\boldsymbol{A}}$ in different positions, filling the rest with 0s.}

\begin{defi}
We consider $\boldsymbol{A}$ to be valid for WAX decomposition \eqref{WAX} if the set of matrices $\boldsymbol{H}$ that does not admit such a decomposition has measure 0. This is equivalent as saying that the probability of a randomly chosen $\boldsymbol{H}$ admitting WAX decomposition for a valid $\boldsymbol{A}$ is 1.
\end{defi}

The following theorem will be useful to check if an $\boldsymbol{A}$ matrix is valid or not.
\begin{thm}\label{th:A_randH}
Consider a fixed $M\times T$ matrix $\boldsymbol{A}$, where $T$ fulfills \eqref{eq:cond_wax}, and a randomly chosen $\boldsymbol{H}\in \mathbb{C}^{M\times K}$, such that $\boldsymbol{B}=[\boldsymbol{B}_1\; \boldsymbol{B}_2]$ from \eqref{eq:B1} is full-rank. If the specific $\boldsymbol{H}$ admits WAX decomposition (\eqref{WAX} with $N=L$) for the given $\boldsymbol{A}$, then $\boldsymbol{A}$ will be valid for WAX decomposition with probability 1, i.e., any other randomly chosen $\boldsymbol{H}$ will admit WAX decomposition for the same $\boldsymbol{A}$ with probability 1.
\end{thm}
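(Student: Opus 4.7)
My approach is to recast WAX existence as the existence of a non-trivial solution to the linear system $\boldsymbol{B}(\boldsymbol{H})\boldsymbol{u}=\boldsymbol{0}$ (which encodes the block-diagonal structure of $\boldsymbol{W}^{-1}$ by construction) under the additional constraint that the $P$ diagonal blocks of the extracted matrix $\boldsymbol{W}^{-1}$ are all non-singular (Lemma~\ref{lemeqv}). The whole proof then reduces to applying the standard ``non-zero polynomial vanishes on a measure-zero set'' argument to two quantities whose non-vanishing at the hypothesized $\boldsymbol{H}$ propagates to almost every $\boldsymbol{H}$: (a) a particular $MK\times MK$ minor of $\boldsymbol{B}(\boldsymbol{H})$, and (b) a certain polynomial in null-space parameters $\boldsymbol{c}$ whose coefficients are rational in $\boldsymbol{H}$.

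For (a), I observe that $\boldsymbol{B}_1$ is independent of $\boldsymbol{H}$ and $\boldsymbol{B}_2=-(\boldsymbol{H}^{T}\otimes\mathbf{I}_M)\mathbf{P}$ is linear in the entries of $\boldsymbol{H}$, so every $MK\times MK$ minor of $\boldsymbol{B}(\boldsymbol{H})$ is a polynomial in the entries of $\boldsymbol{H}$. The hypothesis that $\boldsymbol{B}$ is full-rank at the witness $\boldsymbol{H}$ exhibits one such minor as a non-zero polynomial, whose vanishing locus therefore has measure zero. Off that locus $\boldsymbol{B}$ has rank $MK$ and its null space has constant dimension $d=TK+ML-MK$, which is positive by \eqref{eq:cond_wax}. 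Fixing a column index set picking an invertible $MK\times MK$ submatrix at the witness (and hence on a Zariski-open neighborhood), one may solve for the pivot coordinates of $\boldsymbol{u}$ as rational functions of $\boldsymbol{H}$ that are linear in the remaining $d$ free coordinates $\boldsymbol{c}$. This yields an explicit parametrization $\boldsymbol{u}(\boldsymbol{H},\boldsymbol{c})$ and an associated block-diagonal factor $\boldsymbol{W}^{-1}(\boldsymbol{H},\boldsymbol{c})$.

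For (b), define
\begin{equation*}
 D(\boldsymbol{H},\boldsymbol{c})=\prod_{p=1}^{P}\det\boldsymbol{W}_p^{-1}(\boldsymbol{H},\boldsymbol{c}),
\end{equation*}
a polynomial in $\boldsymbol{c}$ of degree $M$ whose coefficients are rational functions of $\boldsymbol{H}$. The WAX decomposition exists at $\boldsymbol{H}$ precisely when $D(\boldsymbol{H},\cdot)$ is not identically zero in $\boldsymbol{c}$. At the witness $\boldsymbol{H}$, the hypothesized decomposition supplies a $\boldsymbol{c}_0$ with $D\neq 0$, so some coefficient of $D$ viewed as a polynomial in $\boldsymbol{c}$ is a non-zero rational function of $\boldsymbol{H}$; its vanishing locus has measure zero, and off that locus $D(\boldsymbol{H},\cdot)$ is a non-trivial polynomial in $\boldsymbol{c}$ whose zero set in $\boldsymbol{c}$ again has measure zero. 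Hence a valid $\boldsymbol{c}$, and therefore a WAX decomposition, exists for almost every $\boldsymbol{H}$, which is exactly the statement that $\boldsymbol{A}$ is valid. The step I expect to require the most care is the bookkeeping in (b), namely verifying that the parametrization of (a) actually captures the witness vector $\boldsymbol{u}_0$ when $\boldsymbol{c}_0$ is taken to be its free coordinates; this is automatic because $\boldsymbol{u}_0$ is uniquely recovered from its free coordinates via the invertible pivot submatrix, but it must be combined consistently with the measure-zero exclusion from (a).
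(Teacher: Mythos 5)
Your proposal is correct and follows essentially the same route as the paper: both reduce WAX existence to the vectorized linear system \eqref{eq:lin_wax}, parametrize the null space by $F=TK+ML-MK$ free variables whose induced block determinants are polynomials with coefficients rational in $\boldsymbol{H}$ (the paper's \eqref{eq:det_W}, your $D(\boldsymbol{H},\boldsymbol{c})$), and then use the witness $\boldsymbol{H}$ to certify that these rational coefficients do not vanish identically, so their zero locus has measure zero. Your write-up is merely a more explicit, self-contained version of the argument the paper obtains as a side-result of Appendix~A.
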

\begin{proof}
The proof is a side-result from the proof of Theorem~\ref{WAXmain} in Appendix A. We just have to notice that the determinant of $\widetilde{\boldsymbol{W}}_p$ still fulfills \eqref{eq:det_W} if $\boldsymbol{A}$ is fixed, so it is enough if we can find an $\boldsymbol{H}$ that gives $\mathrm{det}({\widetilde{\boldsymbol{W}}_p})\neq 0$ so that the determinant is 0 only for a countable set of $\boldsymbol{H}$ matrices.
\end{proof}

The main contribution of Theorem~\ref{th:A_randH} is that we can know if an $\boldsymbol{A}$ matrix is valid or not simply by trying to perform WAX decomposition of a single randomly chosen $\boldsymbol{H}$ using that $\boldsymbol{A}$. Theorem~\ref{th:A_randH} will be widely used for our simulation results since it is our main result on the validity of $\boldsymbol{A}$. An essentially equivalent statement to that of Theorem~\ref{th:A_randH} is to say that, for a given matrix $\boldsymbol{A}$, WAX decomposition of a randomly chosen matrix $\boldsymbol{H}$ is possible with probability either 1 or 0.

The following lemma states a necessary condition for $\boldsymbol{A}$ to be valid.
\begin{lem} \label{lem:blockrank}
Let $\widehat{\boldsymbol{A}}$ be an $F L \times T$ submatrix of $\boldsymbol{A}$ formed by selecting $F$ out of the $P$ blocks of size $L \times T$  that conform $\boldsymbol{A}$, $\{\boldsymbol{A}_1,\dots,\boldsymbol{A}_P\}$. If $\boldsymbol{A}$ is valid, then
\begin{equation}\label{eq:rank_A}
    \mathrm{rank}(\widehat{\boldsymbol{A}}) = \min(F L,K).
\end{equation}
\end{lem}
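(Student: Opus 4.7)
The plan is to establish the rank identity via a rank-monotonicity argument applied to the WAX equation restricted to the selected panels, with the substantive content being the lower bound $\mathrm{rank}(\widehat{\boldsymbol{A}}) \geq \min(FL,K)$; the matching upper bound comes from an immediate row-count observation.

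First I would exploit the block-diagonal structure of $\boldsymbol{W}$: any WAX decomposition $\boldsymbol{H}=\boldsymbol{W}\boldsymbol{A}\boldsymbol{X}$ decomposes panel-wise as $\boldsymbol{H}_p=\boldsymbol{W}_p\boldsymbol{A}_p\boldsymbol{X}$ for $p=1,\dots,P$. Collecting only the $F$ panels whose $\boldsymbol{A}_p$ blocks compose $\widehat{\boldsymbol{A}}$ and stacking them vertically yields
\[
\widehat{\boldsymbol{H}}=\widehat{\boldsymbol{W}}\,\widehat{\boldsymbol{A}}\,\boldsymbol{X},
\]
where $\widehat{\boldsymbol{H}}$ is the corresponding $FL\times K$ row-submatrix of $\boldsymbol{H}$ and $\widehat{\boldsymbol{W}}=\mathrm{diag}(\boldsymbol{W}_{i_1},\dots,\boldsymbol{W}_{i_F})$ is block-diagonal of size $FL\times FL$. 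Submultiplicativity of rank immediately forces $\mathrm{rank}(\widehat{\boldsymbol{H}})\leq\mathrm{rank}(\widehat{\boldsymbol{A}})$ for every $\boldsymbol{H}$ that admits WAX.

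Next I would argue the lower bound by contrapositive. Suppose $\mathrm{rank}(\widehat{\boldsymbol{A}})<\min(FL,K)$. Since $\boldsymbol{H}$ has continuously distributed entries, the row-submatrix $\widehat{\boldsymbol{H}}$ is itself a continuously distributed $FL\times K$ random matrix, so it attains its maximal rank $\min(FL,K)$ with probability $1$ (the vanishing of a fixed maximal minor is a non-trivial polynomial condition on the entries of $\widehat{\boldsymbol{H}}$, hence a set of Lebesgue measure zero). Consequently, with probability $1$ we would have $\mathrm{rank}(\widehat{\boldsymbol{H}})>\mathrm{rank}(\widehat{\boldsymbol{A}})$, which is incompatible with the identity $\widehat{\boldsymbol{H}}=\widehat{\boldsymbol{W}}\,\widehat{\boldsymbol{A}}\,\boldsymbol{X}$. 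Thus the set of $\boldsymbol{H}$ admitting WAX has measure zero, contradicting the validity of $\boldsymbol{A}$. This yields $\mathrm{rank}(\widehat{\boldsymbol{A}})\geq\min(FL,K)$.

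For the matching upper bound, $\mathrm{rank}(\widehat{\boldsymbol{A}})\leq FL$ is the trivial row-count bound, which closes the equality when $FL\leq K$; in the regime $FL>K$ the statement is effectively the lower bound, since any excess rank of $\widehat{\boldsymbol{A}}$ beyond $K$ is immaterial once post-multiplied by the $T\times K$ matrix $\boldsymbol{X}$, so only $\min(FL,K)$ of that rank is ever used in the decomposition. The main (and essentially only) delicate step I anticipate is the genericity claim that an arbitrary $FL\times K$ row-subselection of a generic $\boldsymbol{H}$ has maximal rank $\min(FL,K)$ almost surely; everything else reduces to the rank/product inequality applied to the restricted WAX identity together with the definition of validity.
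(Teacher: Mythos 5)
Your proof is correct and follows essentially the same route as the paper's: restrict the WAX identity to the $F$ selected panels to get $\widehat{\boldsymbol{H}}=\widehat{\boldsymbol{W}}\widehat{\boldsymbol{A}}\boldsymbol{X}$, invoke the almost-sure maximal rank of the random submatrix $\widehat{\boldsymbol{H}}$, and apply the rank-of-a-product inequality. Your added remark that only the lower bound $\mathrm{rank}(\widehat{\boldsymbol{A}})\geq\min(FL,K)$ is substantive (and that the stated equality is not literally forced when $FL>K$) is a fair observation, since the paper's own argument likewise only establishes that direction.
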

\begin{proof}
Let $\widehat{\boldsymbol{H}}$ be an $F L \times K$ submatrix of $\boldsymbol{H}$ formed by the $F$ respective $L\times K$ blocks of $\boldsymbol{H}$, and $\widehat{\boldsymbol{W}}$ an $F L\times F L$ block diagonal matrix with the $F$ respective $L \times L$ blocks from $\boldsymbol{W}$ as diagonal elements. If $\boldsymbol{A}$ is valid we can obtain $\widehat{\boldsymbol{W}}$ and $\boldsymbol{X}$ such that $\widehat{\boldsymbol{W}}\widehat{\boldsymbol{A}}\boldsymbol{X}=\widehat{\boldsymbol{H}}$ holds for any $\widehat{\boldsymbol{H}}$ except those in a set of measure 0. For a randomly chosen $\boldsymbol{H}$, $\mathrm{rank}(\widehat{\boldsymbol{H}})=\min(F L, K)$ with probability 1. Since $$ \mathrm{rank}(\widehat{\boldsymbol{W}}\widehat{\boldsymbol{A}}\boldsymbol{X})\leq \min \left(\mathrm{rank}(\widehat{\boldsymbol{W}}), \mathrm{rank}(\widehat{\boldsymbol{A}}), \mathrm{rank}(\boldsymbol{X})\right),$$
condition \eqref{eq:rank_A} must be fulfilled.
\end{proof}
%%%%%%%%%%%%%%%%%%%%%%%%%%%%%%%%%

A further necessary condition for $\boldsymbol{A}$ to be valid is given in Lemma~\ref{rankrow}.
\begin{lem} \label{rankrow}
\color{black}Let $\boldsymbol{A}_0$ be a submatrix of $\boldsymbol{A}$ formed by selecting $R$ rows from $\boldsymbol{A}$, where all rows are in different blocks $\boldsymbol{A}_p$. If $\boldsymbol{A}$ is valid, then
$$\mathrm{rank}(\boldsymbol{A}_0) > R\frac{K-L}{K}$$
\end{lem}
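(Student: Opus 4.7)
The plan is to prove the contrapositive: assume $r := \mathrm{rank}(\boldsymbol{A}_0) \leq R(K-L)/K$, equivalently $RL \leq K(R-r)$, and derive that $\boldsymbol{A}$ cannot be valid. My first step would be to invoke Lemma~\ref{lemeqv} to rewrite the WAX equation as $\boldsymbol{A}\boldsymbol{X}=\boldsymbol{M}\boldsymbol{H}$ with $\boldsymbol{M}$ block-diagonal and invertible, and then to extract the $R$ selected rows, giving a sub-equation $\boldsymbol{A}_0\boldsymbol{X}=\boldsymbol{M}_0\boldsymbol{H}$. Here $\boldsymbol{M}_0\in\mathbb{C}^{R\times M}$ has row $j$ consisting of $L$ possibly non-zero entries, collected into a vector $\boldsymbol{m}_j^T\in\mathbb{C}^L$, sitting inside column block $p_j$; invertibility of $\boldsymbol{M}$ forces each $\boldsymbol{m}_j\neq\boldsymbol{0}$.

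Next, I would translate the rank deficiency of $\boldsymbol{A}_0$ into a constraint on the $\boldsymbol{m}_j$'s. Since the column space of $\boldsymbol{A}_0\boldsymbol{X}$ lies in that of $\boldsymbol{A}_0$ and has dimension at most $r$, choosing $\boldsymbol{D}\in\mathbb{C}^{(R-r)\times R}$ as a basis for the left null-space of $\boldsymbol{A}_0$ gives $\boldsymbol{D}\boldsymbol{M}_0\boldsymbol{H}=\boldsymbol{0}$, which written out becomes
\[
    \sum_{j=1}^R D_{ij}\boldsymbol{m}_j^T\boldsymbol{H}_{p_j}=\boldsymbol{0}^T,\quad i=1,\ldots,R-r,
\]
a homogeneous linear system of $K(R-r)$ scalar equations in the $RL$ unknowns $\boldsymbol{m}=(\boldsymbol{m}_1,\ldots,\boldsymbol{m}_R)$. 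I would then study the rank of this system for a generic $\boldsymbol{H}$.

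The key structural observation is that, after reordering, the columns of the constraint matrix associated with block $j$ span the tensor-product subspace $\boldsymbol{d}_j\otimes V_j\subset\mathbb{C}^{(R-r)K}$, where $\boldsymbol{d}_j$ is the $j$-th column of $\boldsymbol{D}$ and $V_j=\mathrm{rowspace}(\boldsymbol{H}_{p_j})$ is a generically $L$-dimensional subspace of $\mathbb{C}^K$. Let $Z^c=\{j:\boldsymbol{d}_j\neq\boldsymbol{0}\}$ of cardinality $q$; since $\boldsymbol{D}$ has full row rank $R-r$, its zero columns contribute nothing to that rank, so $\{\boldsymbol{d}_j\}_{j\in Z^c}$ must still span $\mathbb{C}^{R-r}$ and $q\geq R-r$. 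A generic-rank argument then yields $\dim\sum_{j\in Z^c}\boldsymbol{d}_j\otimes V_j=\min(qL,K(R-r))$. Under the standing assumption $RL\leq K(R-r)$ one has $qL\leq RL\leq K(R-r)$, so this dimension equals $qL$, meaning the system has full column rank on its active variables and forces $\boldsymbol{m}_j=\boldsymbol{0}$ for every $j\in Z^c$. Because $\boldsymbol{D}\neq\boldsymbol{0}$ when $r<R$, the set $Z^c$ is non-empty, which contradicts $\boldsymbol{m}_j\neq\boldsymbol{0}$ and completes the argument.

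The step I expect to be the main obstacle is rigorously certifying the generic rank $\min(qL,K(R-r))$ of $\sum_{j\in Z^c}\boldsymbol{d}_j\otimes V_j$. My intended approach mirrors the reasoning behind Theorem~\ref{th:A_randH}: exhibit a single explicit $\boldsymbol{H}$ — for instance, with the $V_j$ taken to be suitably disjoint coordinate subspaces of $\mathbb{C}^K$ — for which the expected dimension is attained, and then note that the set of $\boldsymbol{H}$ producing strictly smaller rank is the zero locus of certain polynomial minors and therefore has Lebesgue measure zero in $\mathbb{C}^{M\times K}$.
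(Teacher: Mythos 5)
Your proposal is, in substance, the paper's own proof (Appendix~B): there, $\boldsymbol{X}$ is eliminated by left-multiplying the vectorized system with $\boldsymbol{I}_K\otimes\boldsymbol{Q}_0^H$ and retaining the rows where $\boldsymbol{R}_0$ vanishes, and those rows of $\boldsymbol{Q}_0^H$ are exactly your left-null-space basis $\boldsymbol{D}$; the paper then makes the same count of $RL$ unknowns against $K(R-\kappa)$ equations and asserts that the reduced coefficient matrix is generically full rank. So the route is identical; only the packaging (matrix equation plus tensor products versus vectorization plus QR) differs.

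One caveat on the step you yourself flag as the obstacle: the formula $\dim\sum_{j\in Z^c}\boldsymbol{d}_j\otimes V_j=\min(qL,K(R-r))$ does not hold for every fixed $\boldsymbol{A}_0$ with generic $V_j$. If two nonzero columns of $\boldsymbol{D}$ are proportional, the two corresponding summands collapse to $\boldsymbol{d}_j\otimes(V_j+V_{j'})$, of dimension $\min(2L,K)$, which is strictly less than $2L$ whenever $2L>K$. This can happen inside the regime $RL\le K(R-r)$: take $\boldsymbol{A}_0$ with rows $\boldsymbol{a},\boldsymbol{a},\boldsymbol{0}$ and $K=3$, $L=2$, so $R=3$, $r=1$, $qL=6=K(R-r)$, yet the generic rank is $5$. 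In such cases no witness $\boldsymbol{H}$ attains your claimed dimension, so the single-witness-plus-polynomial certification cannot close the step as stated. The repair is to aim at the weaker, and sufficient, target that every solution forces \emph{some} $\boldsymbol{m}_j$ with $j\in Z^c$ to vanish (in the example, $\boldsymbol{m}_3=\boldsymbol{0}$ is still forced, which already contradicts invertibility of the corresponding block), rather than all of them. The paper's appendix asserts the same blanket full-rank property without treating this degeneracy, so your write-up is no less rigorous than the published proof, but this is the one point that needs care in either version.
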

\begin{proof}
See Appendix B.
\end{proof}

We point out that, since $\boldsymbol{A}$ is an $M\times T$ matrix, $\mathrm{rank}(\boldsymbol{A}_0)$ cannot exceed $T$. However, with $T>M(K-L)/K$, it is guaranteed that $R\frac{K-L}{K}<T$.  

An immediate consequence of Lemma~\ref{rankrow} is that a block $\boldsymbol{A}_p$ cannot be repeated arbitrarily often in $\boldsymbol{A}$. In addition, any block $\boldsymbol{A}_p$ must have rank $L$ (see Lemma~\ref{lem:blockrank}). Repeating the block $\boldsymbol{A}_p$ $r$ times in $\boldsymbol{A}$, and selecting $\boldsymbol{A}_0$ as the same row within each of these $r$ blocks yields,
$$1>r\frac{K-L}{K},$$
which implies $r<\frac{K}{K-L}.$  Whenever $L\leq K/2$, $r=1$ so that each block $\boldsymbol{A}_p$ can only occur once in $\boldsymbol{A}$.

Appendix C includes another, less intuitive, necessary condition for $\boldsymbol{A}$ to be valid. Despite extensive efforts, we have not been able to establish sufficient conditions for having a valid $\boldsymbol{A}$. The provided necessary conditions might not be helpful in generating valid $\boldsymbol{A}$ matrices, but they constitute initial progress on the matter. Thus, the problem of establishing sufficient conditions for valid $\boldsymbol{A}$ matrices remains open.

\section{Finding Sparse $\boldsymbol{A}$ Matrices}
\label{section:sparse_A}
Let us keep restricting ourselves to the case where $N=L$ due to the generality of this case. Recall, however, that the transformation \eqref{eq:WAX_N}, which allows considering other $N$ values, maintains the overall sparsity properties of $\boldsymbol{A}$ (although the percentage of 1s can decrease since 0s are being padded). As we have mentioned previously, from an implementation point of view, it is desirable to have $\boldsymbol{A}$ as a sparse matrix of 1s and 0s with as few 1s as possible. The main reason is the direct relation between the number of 1s in $\boldsymbol{A}$ and the number of sum operations required to implement such a combining matrix. Recall that we can view $\boldsymbol{A}$ as the matrix associated to a predesigned hardware combining module, which could be intuitively implemented through a network of sum modules.

In the previous section, we provided some necessary conditions on matrix $\boldsymbol{A}$ for the WAX decomposition to be applicable. However, after extensive research on the matter, we have not been able to find sufficient conditions for having a valid $\boldsymbol{A}$. This motivates the need of simulation results to further understand the limits on the sparsity of $\boldsymbol{A}$. Nevertheless, we will support these simulations with a simple theoretical bound that allows us to gain better understanding of how sparse  $\boldsymbol{A}$ may be.

From Lemma~\ref{lem:blockrank}, we can say that each $L\times T$ block $\boldsymbol{A}_p$ $p \in \{1,\dots,P\}$, has to be of rank $L$. If we put it together with Lemma~\ref{rankrow} we can say that a valid $\boldsymbol{A}$ has a maximum of
\begin{equation}
    r_\mathrm{max} = \left\lceil \frac{K}{K-L}-1 \right\rceil
\end{equation}
equal $\boldsymbol{A}_p$ blocks. Therefore, if we aim at $\boldsymbol{A}$ matrices of 1s and 0s with the minimum number of 1s, we can lower bound the number of 1s through the following lemma.

\begin{lem}\label{lem:max1s}
Given a valid $\boldsymbol{A} \in \{ 1, 0\}^{M\times T}$, the number of 1s in $\boldsymbol{A}$, which incidentally coincides with its squared Frobenius norm, can be lower bounded by
\begin{equation}
    \Vert \boldsymbol{A}  \Vert_\mathrm{F}^2 \geq \sum_{k=1}^{Q-1} r_\mathrm{max}(k-Q) \binom{T}{k} +Q  M,
\end{equation}
where $Q$ is obtained by
\begin{equation}
    Q = \arg \min_q \sum_{k=1}^q r_\mathrm{max} \binom{T}{k} \geq M.
\end{equation}
\end{lem}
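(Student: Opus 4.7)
The plan is to recast Lemma~\ref{lem:max1s} as an integer-programming problem on the weight distribution of the rows of $\boldsymbol{A}$ and then solve it by a greedy packing argument. The crux is to first establish a capacity bound $n_k\leq r_\mathrm{max}\binom{T}{k}$ on the number of rows of $\boldsymbol{A}$ having Hamming weight $k$; once that is in place, minimizing $\sum_k k\, n_k$ subject to $\sum_k n_k=M$ is routine.

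I would begin by showing that within any single block $\boldsymbol{A}_p$ no two rows can coincide and no row can be all-zero. Both facts follow from Lemma~\ref{lem:blockrank} applied with $F=1$, which forces $\mathrm{rank}(\boldsymbol{A}_p)=L$; since $\boldsymbol{A}_p$ has exactly $L$ rows, these rows must be linearly independent, hence pairwise distinct and none all-zero. Consequently, for any binary pattern $\boldsymbol{v}\in\{0,1\}^T$, the number of times $\boldsymbol{v}$ occurs as a row of $\boldsymbol{A}$ equals the number of distinct blocks that contain it. Picking one copy of $\boldsymbol{v}$ from each such block yields a rank-$1$ submatrix $\boldsymbol{A}_0$ composed of rows from different blocks; Lemma~\ref{rankrow} then gives $1>r(K-L)/K$, i.e.\ $r\leq r_\mathrm{max}$. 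Since there are exactly $\binom{T}{k}$ binary vectors of weight $k$, summing over patterns yields the per-weight capacity $n_k\leq r_\mathrm{max}\binom{T}{k}$, with the additional constraint $n_0=0$.

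Finally, the minimization of $\Vert\boldsymbol{A}\Vert_\mathrm{F}^2=\sum_{k\geq 1} k\, n_k$ subject to $\sum_{k\geq 1} n_k=M$ and the capacities above is solved greedily by saturating the lowest-weight classes first. Defining $Q$ as in the lemma statement, the minimizing allocation sets $n_k=r_\mathrm{max}\binom{T}{k}$ for $k<Q$, $n_Q=M-\sum_{k=1}^{Q-1}r_\mathrm{max}\binom{T}{k}$, and $n_k=0$ for $k>Q$; standard exchange/interchange arguments verify optimality. Substituting and collecting terms yields exactly $QM+\sum_{k=1}^{Q-1}r_\mathrm{max}(k-Q)\binom{T}{k}$. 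The main obstacle I expect is the first step, specifically the need to exclude within-block row repetitions: without the binary-entry structure and the block-rank condition conspiring to force pairwise distinct rows inside each $\boldsymbol{A}_p$, Lemma~\ref{rankrow} would only bound cross-block occurrences and the clean per-weight cap $n_k\leq r_\mathrm{max}\binom{T}{k}$ would fail to follow.
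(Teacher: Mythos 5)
Your proof follows essentially the same route as the paper's: cap each row pattern at $r_\mathrm{max}$ occurrences via Lemma~\ref{rankrow}, exclude all-zero rows via Lemma~\ref{lem:blockrank}, and greedily fill the $M$ rows from the lowest-weight classes upward, which yields exactly the stated bound. If anything, your justification of the per-pattern cap is more careful than the paper's, which merely asserts that the different-blocks restriction of Lemma~\ref{rankrow} ``can be relaxed,'' whereas you derive within-block distinctness of rows from the rank-$L$ condition of Lemma~\ref{lem:blockrank} with $F=1$, so that all repetitions of a pattern necessarily occur in distinct blocks and the cap applies to the total count.
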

\begin{proof}
Let us impose the restriction that every row in $\boldsymbol{A}$ can be repeated a maximum of $r_\mathrm{max}$ times, as suggested by Lemma~\ref{rankrow}, where the restriction of selecting the rows from different blocks can be relaxed since we are interested in a lower bound. Each row of $\boldsymbol{A}$ must have at least a 1 at some location, otherwise, considering $L\leq T$, a row of only 0s would result in an $\boldsymbol{A}_p$ block with rank lower than $L$ (contradicting Lemma~\ref{lem:blockrank}). Therefore, we can bound the number of 1s in $\boldsymbol{A}$ by considering all possible combinations of rows having a single 1, each of which could be repeated a maximum of $r_\mathrm{max}$ times. Then, we can do the same for rows having 2 1s, and go on until we have enough rows to fill the $M$ rows of $\boldsymbol{A}$. Straightforward combinatorics result in the statements in the lemma.
\end{proof}

Even though the lower bound presented in Lemma~\ref{lem:max1s} might look a bit hard to compute at first sight, for reasonable values of the design variables ($M\leq 3 r_\mathrm{max}T$), it is enough to consider only rows with up to two 1s to have enough rows for filling $\boldsymbol{A}$, i.e., $Q\leq 2$. 

Figs. \ref{fig:opt_per1} and \ref{fig:opt_per1_L} show the minimum percentage of 1s required to have a valid $\boldsymbol{A}$ through numerical optimization, as well as the theoretical lower bound from Lemma~\ref{lem:max1s}, with respect to $100/T_\mathrm{opt}$. The optimum $T$, $T_\mathrm{opt}$, is computed as
\begin{equation}\label{eq:T_opt}
    T_\mathrm{opt}=\max\left(\left\lfloor M\frac{K-L}{K}+1\right \rfloor,K\right),
\end{equation}
which corresponds to the minimum integer $T$ that fulfills Theorem~\ref{WAXmain}. The reason for having $100/T_\mathrm{opt}$ is that it corresponds to a trivial lower bound in the percentage of 1s, i.e., if we have a single 1 per row. We can see that the relation between the minimum percentage of 1s and $100/T_\mathrm{opt}$ is close to linear. In Fig.~\ref{fig:opt_per1}, $K$ is fixed and $L$ is changed, while in Fig.~\ref{fig:opt_per1_L} it is the other way around. Although our initial intention was not to provide a tight bound to the numerical simulation, we can note that the obtained percentage of 1s for valid $\boldsymbol{A}$ matrices is in general close to the lower bound computed given by Lemma~\ref{lem:max1s} (within $\pm3\%$ error in the plots).

The algorithm for computing the minimum percentage of 1s consists of selecting $\boldsymbol{A}$ by adding 1s at random positions, with some constraints related to Lemma~\ref{lem:blockrank}, until a valid $\boldsymbol{A}$ has been found. Then, we reduce through exhaustive search the number of 1s in $\boldsymbol{A}$ as much as possible while maintaining its validity. The validity of $\boldsymbol{A}$ is evaluated by considering Theorem~\ref{th:A_randH}.

Table \ref{tab:n_ones} shows the minimum number of 1s for having a valid $\boldsymbol{A}$ considering different values of $K$ and $L$ (recall $L\leq K$ for information-lossless processing). The same algorithm as for Figs. \ref{fig:opt_per1} and \ref{fig:opt_per1_L} is employed, and $T$ is also selected to be $T_\mathrm{opt}$ from \eqref{eq:T_opt}. We can see that, even though from Figs. \ref{fig:opt_per1} and \ref{fig:opt_per1_L} the percentage of 1s increases with $L$ and decreases with $K$, for the total number of 1s it is the other way around, i.e., it increases with $K$ and decreases with $L$. This is because the size of $\boldsymbol{A}$ increases at a higher rate when $T$ is increased. However, if we implement $\boldsymbol{A}$ using sum modules with only two inputs, we would need to subtract $T$ to the number of 1s in $\boldsymbol{A}$, since a single 1 in a column of $\boldsymbol{A}$ would not require any sum module. Table \ref{tab:n_sums} shows the minimum number of 2-input sum modules required to have a valid $\boldsymbol{A}$, i.e., the values of this table correspond to the values of Table \ref{tab:n_ones} subtracting the respective $T$ value to each entry. As we can see, the number of sum modules required does not vary considerably with $K$ and $L$, and it remains in the order of $M$.
\begin{figure}[h]
	\centering
	\includegraphics[scale=0.55]{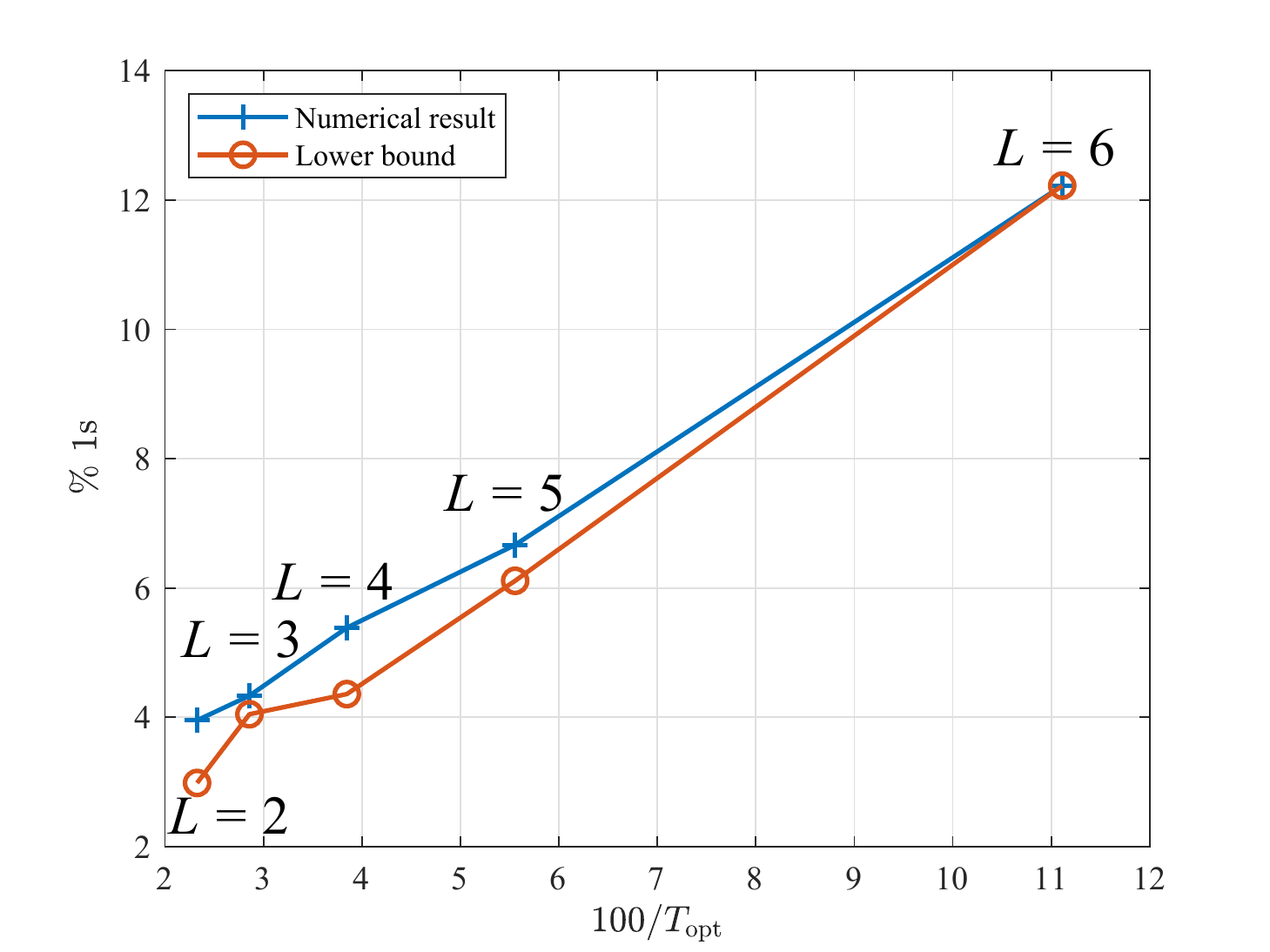}
\caption{Minimum percentage of 1s required for a valid $\boldsymbol{A}\in \{ 0,1 \}^{M\times T}$ with respect to $100/T_\mathrm{opt}$ for $M=60$, $K=7$, $L=\{2,3,4,5,6\}$ ($N=L$).}
	\label{fig:opt_per1}
\end{figure}

\begin{figure}[h]
	\centering
	\includegraphics[scale=0.55]{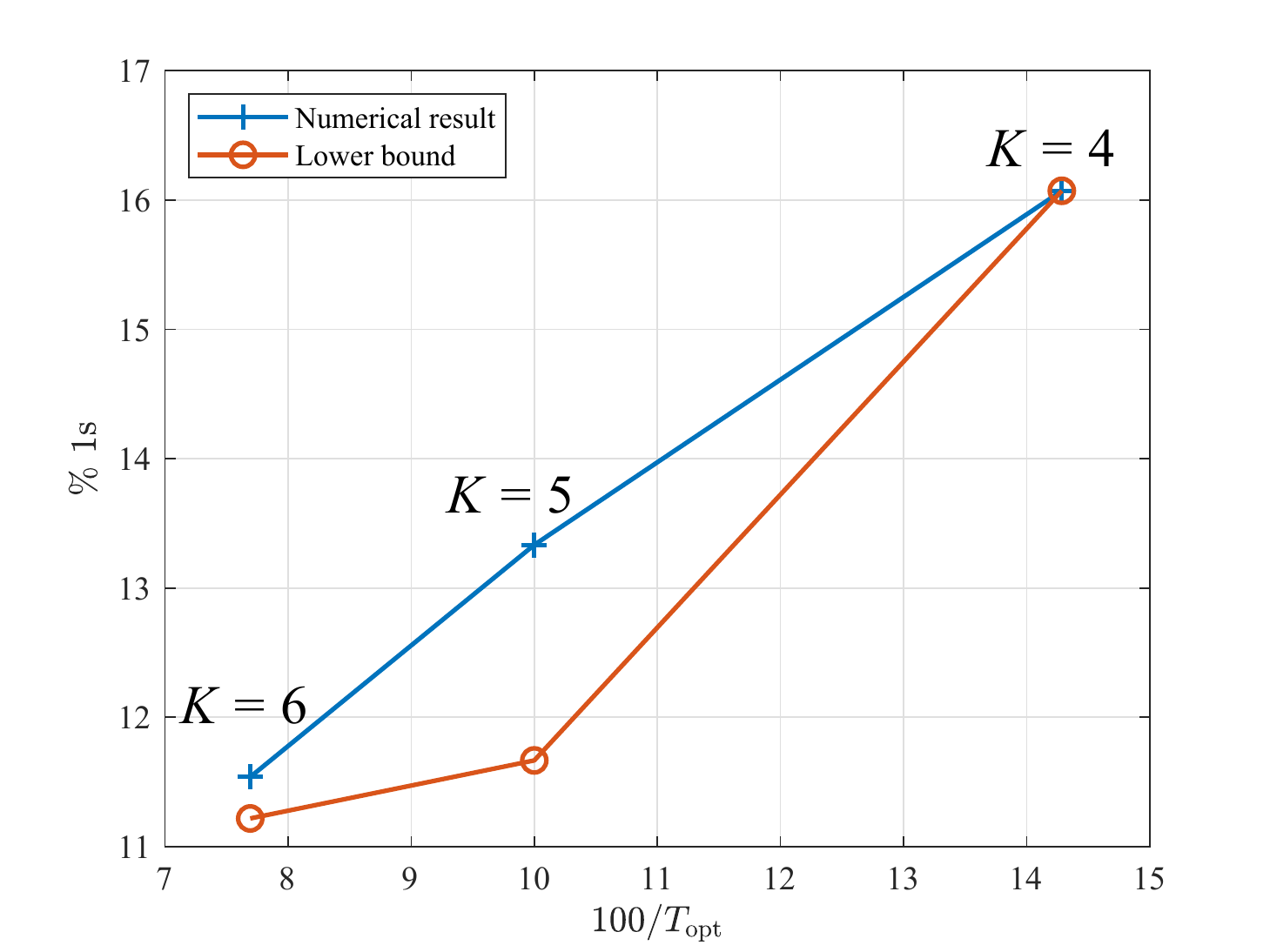}
\caption{Minimum percentage of 1s required for a valid $\boldsymbol{A}\in \{ 0,1 \}^{M\times T}$ with respect to $100/T_\mathrm{opt}$ for $M=24$, $K=\{6,5,4\}$, $L=3$ ($N=L$).}
	\label{fig:opt_per1_L}
\end{figure}

\begin{table}
    \centering
    \scalebox{1.3}{
    \begin{tabular}{|c||c|c|c|c|c|c|c|}
    \hline
         L/K & $1$ & $2$ & $3$ & $4$ & $5$ & $6$ & $7$  \\ \hhline{|==|=|=|=|=|=|=|}
         $1$ & 60 & 89 & 98 & 102 & 104 & 105 & 108 \\ \hline
         $2$ & - & 60 & 78 & 89 & 94 & 98 & 102 \\ \hline
         $3$ & - & - & 60 & 72 & 80 & 89 & 91 \\ \hline
         $4$ & - & - & - & 60 & 68 & 78 & 84 \\ \hline
         $5$ & - & - & - & - & 60 & 65 & 72 \\ \hline
         $6$ & - & - & - & - & - & 60 & 66 \\ \hline
    \end{tabular}}
    \caption{Minimum number of 1s required to have a valid $\boldsymbol{A}$ for $M=60$, $T=T_\mathrm{opt}$}
    \label{tab:n_ones}
\end{table}

\begin{table}
    \centering
    \scalebox{1.3}{
    \begin{tabular}{|c||c|c|c|c|c|c|c|}
    \hline
         L/K & $1$ & $2$ & $3$ & $4$ & $5$ & $6$ & $7$  \\ \hhline{|==|=|=|=|=|=|=|}
         $1$ & 59 & 58 & 57 & 56 & 55 & 54 & 56 \\ \hline
         $2$ & - & 58 & 57 & 58 & 57 & 57 & 59 \\ \hline
         $3$ & - & - & 57 & 56 & 55 & 58 & 56 \\ \hline
         $4$ & - & - & - & 56 & 55 & 57 & 58 \\ \hline
         $5$ & - & - & - & - & 55 & 54 & 54 \\ \hline
         $6$ & - & - & - & - & - & 54 & 57 \\ \hline
    \end{tabular}}
    \caption{Minimum number of 2-input sum modules required to implement a valid $\boldsymbol{A}$ for $M=60$, $T=T_\mathrm{opt}$}
    \label{tab:n_sums}
\end{table}

Fig.~\ref{fig:comp_cost} shows the percentage of valid $\boldsymbol{A}$ matrices for different percentage of 1s, where the 1s are placed at random positions with the only restriction that they have at least a 1 per row and per column. The algorithm employed consists of generating a big number of random $\boldsymbol{A}$ matrices with the corresponding percentage of 1s (while fulfilling a simple restriction related to Lemma~\ref{lem:blockrank} with $F=1$), and checking what percentage of them are valid (Theorem~\ref{th:A_randH} comes in handy here too). The different curves correspond to the same parameter combinations as Fig.~\ref{fig:opt_per1_L}, where we can find the minimum percentage of 1s that apply to the different values of $K$. As we can see, finding an $\boldsymbol{A}$ attaining a percentage of 1s slightly higher  than the optimum one (+10\%) is not that difficult since we can just put 1s at random positions and, with high probability (around 80\% in the case of $K=4$), this $\boldsymbol{A}$ will be valid.

\begin{figure}[h]
	\centering
	\includegraphics[scale=0.55]{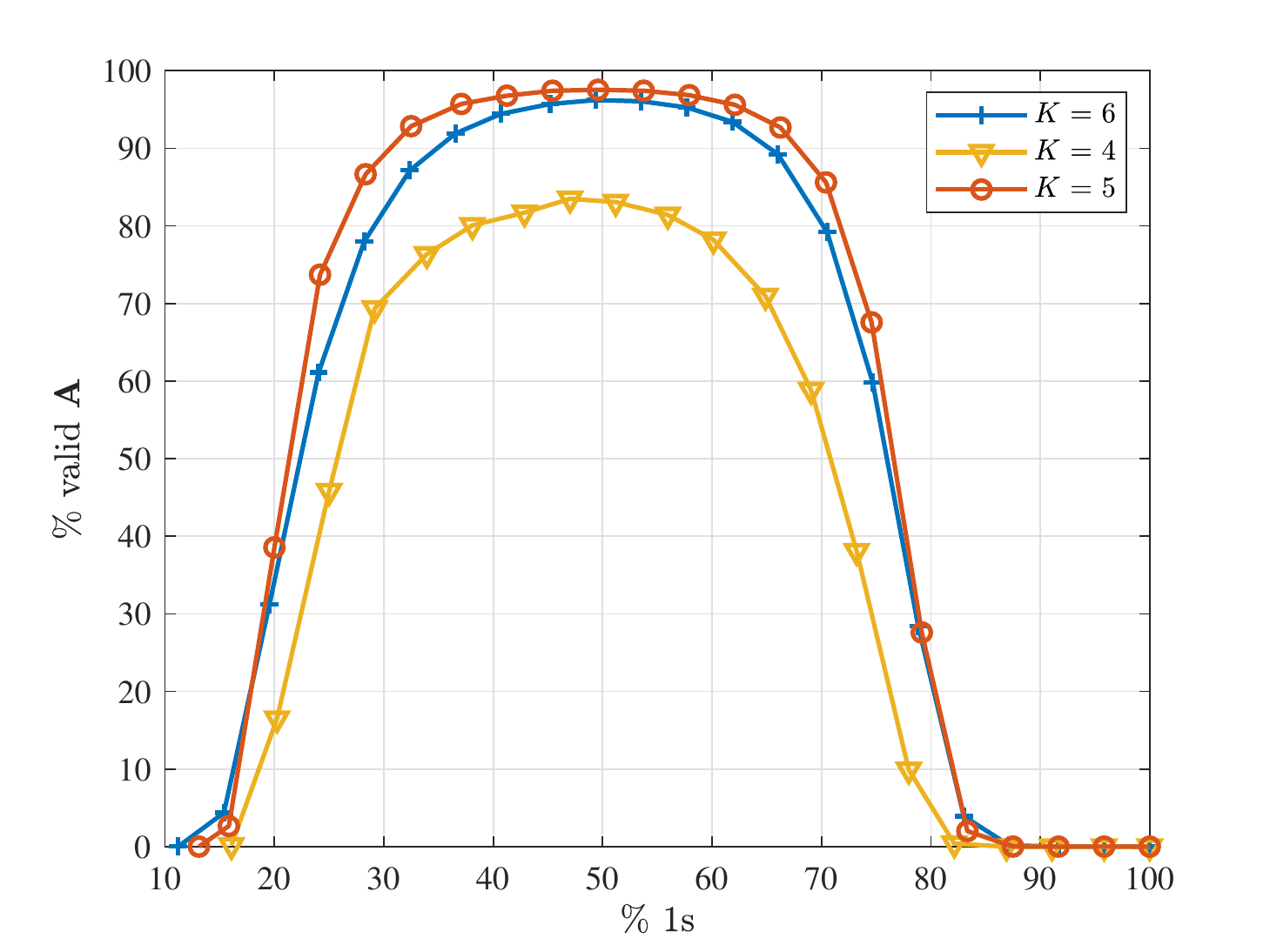}
\caption{Percentage of valid $\boldsymbol{A}\in \{0,1\}^{M\times T}$ with 1s at random positions with respect to the percentage of 1s for $M=24$, $L=3$ (N=L).}
	\label{fig:comp_cost}
\end{figure}

\section{Discussion and Examples} %%%%%%%%%%%%%%%%%%%% CHANGE ALMOST COMPLETELY
\label{section:tradeoff}
As mentioned previously, one of the goals of this paper is to find the trade-off between the different system parameters required for the equivalent processing to be information-lossless. An interesting case is when $N=1$, since it relates directly to the trade-off between the centralized and decentralized architectures shown in Fig.~\ref{fig:arch_old}. Assuming that our framework is equipped with a valid matrix $\boldsymbol{A}$, the trade-off between $L$ and $T$ comes directly from condition \eqref{eq:cond_wax}. We can select $T$ as $T_\mathrm{opt}$ from \eqref{eq:T_opt}, as can be seen in Fig.~\ref{fig:tradeoff}.

It is interesting to observe that we reach a reduction compared with the centralized architecture also for $L=1$. To elaborate a bit further, we observe that with $L=1$, the number of CPU inputs becomes $T=T_\mathrm{max} \triangleq \left\lfloor M-\frac{M}{K}+1\right \rfloor$. This reduction comes about since we have allowed the antennas to perform multiplications, which leads to a reduction in the number of CPU inputs from $M$ to, at most, $T_\mathrm{max}$. The centralized architecture, illustrated in the left part of Fig.~\ref{fig:arch_old}, has the same number of outputs per antenna, namely $1$, but does not perform any multiplications. Therefore, the CPU must operate with $T=M$. 
If we let $L_{\mathrm{mult}}$ denote the number of multiplications per antenna, the centralized architecture corresponds to $L_{\mathrm{mult}}=0$, and we can select $T$ as
$$T= \left\{ \begin{array}{ll} M & L_{\mathrm{mult}} =0 \\ \max\left(\left\lfloor M\frac{K-L_{\mathrm{mult}}}{K}+1\right \rfloor,K\right) & L_{\mathrm{mult}} >0. \end{array} \right.$$
This is conceptually illustrated in the right part of Fig.~\ref{fig:tradeoff}.
\begin{figure}
     \centering
     \begin{subfigure}[b]{0.23\textwidth}
         \centering
         \includegraphics[scale=0.34]{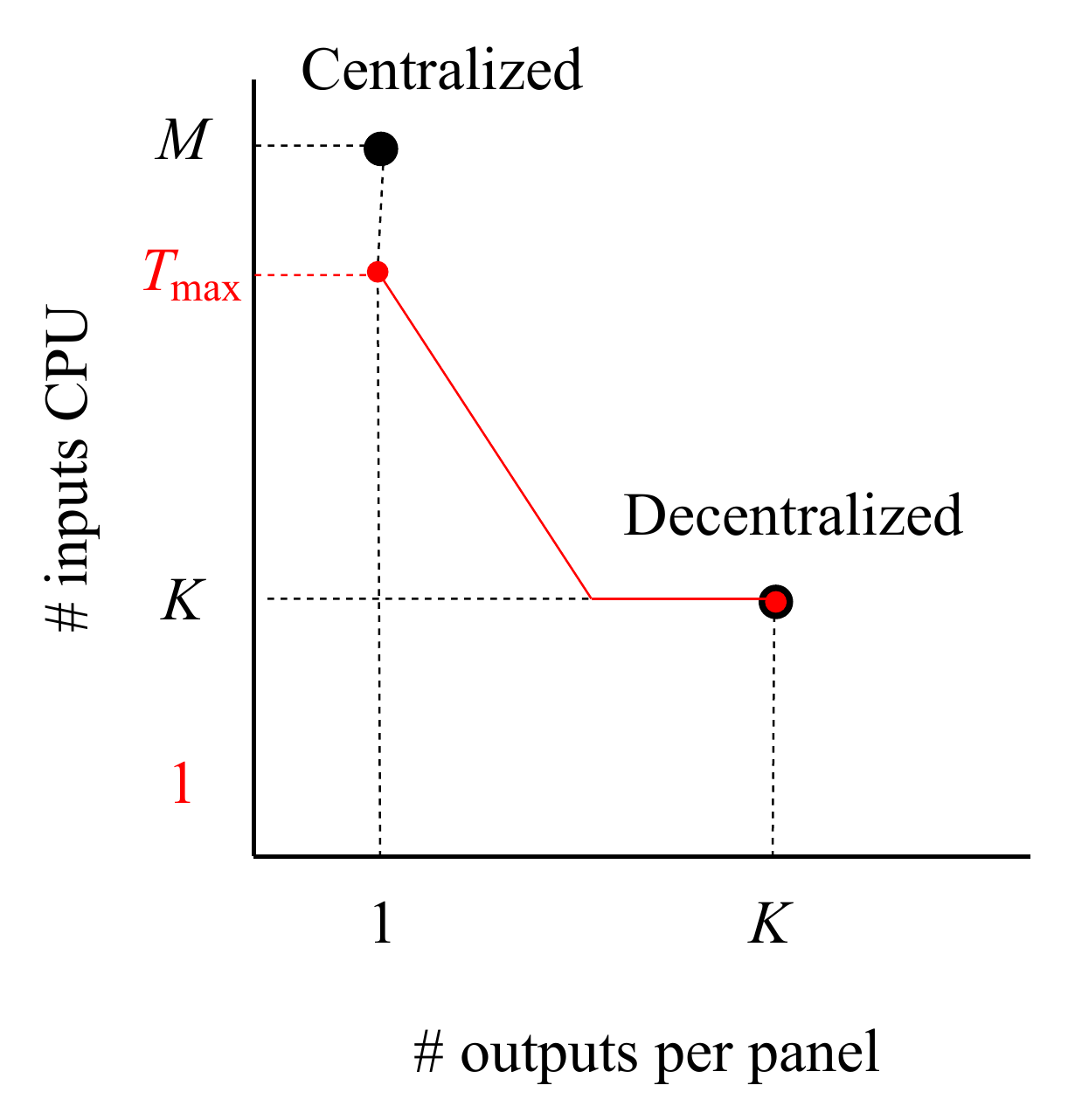}
     \end{subfigure}
     \hfill
     \begin{subfigure}[b]{0.23\textwidth}
         \centering
         \includegraphics[scale=0.34]{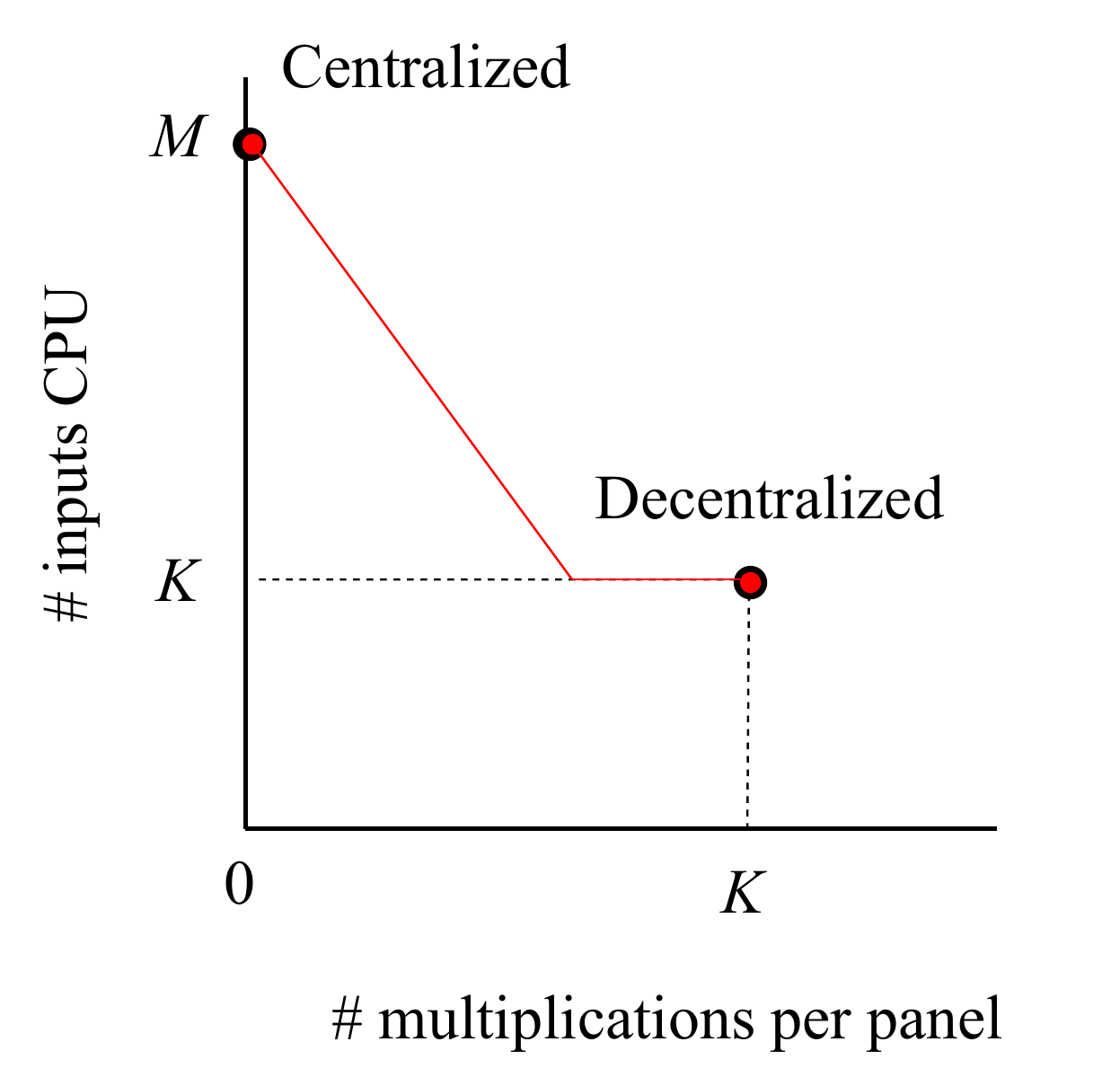}
     \end{subfigure}
        \caption{Number of inputs to the CPU (related to $T$) v.s. number of outputs (left)/number of multiplications (right) per antenna (related to $L$).}
        \label{fig:tradeoff}
\end{figure}

We conclude this section with a few examples.
\begin{ex} 
Assume a design with a CPU limited to $T~\leq~50$ inputs, and antenna panels with $L=2$ outputs. We now consider how many antennas and users ($M$ and $K$, respectively) can be handled by the system. From \eqref{eq:cond_wax}, we have  $50>M\frac{K-2}{K}$ implying that
$$50\frac{K}{K-2}>M.$$
To maximize the left hand side, excluding the special case $K=L=2$ (which allows for an unlimited number of antennas), we set $K=3$ and obtain $M<150$ so that we can at most use $149$ antennas. Differently put, if we choose to equip the base station with 149 antennas, we can at most serve $K=3$ users. With 150+ antennas, only 2 users can be served. Setting $K=4$, yields that at most 99 antennas can be used.
\end{ex}

We next provide two numerical examples of the WAX decomposition. The first one is meant to illustrate that it is indeed possible to obtain valid sparse matrices $\boldsymbol{A}$ comprising only elements in the set $\{0,1\}$.
\begin{ex} Let $M=100$, $N=4$, $P=25$, $K=10$, and $L=4$. From Theorem~\ref{WAXmain}, we have that $T>100\times 0.6=60$, so we take $T=61$. It can be numerically verified that the matrix
\begin{equation} \label{ex1} 
\boldsymbol{A} = \left[\begin{array}{c} 
\boldsymbol{I}_{61} \\
\hline
\begin{array}{c|c}
\boldsymbol{I}_{39} & \begin{array}{c} \boldsymbol{I}_{22} \\ \hline \begin{array}{c|c} 
\boldsymbol{I}_{17} & \begin{array}{c} \boldsymbol{I}_5 \\ \boldsymbol{I}_5 \\ \boldsymbol{I}_5 \\ \hline \begin{array}{ccc} \boldsymbol{I}_2 & \boldsymbol{I}_2 & \boldsymbol{0}_{2\times 1} \end{array}
\end{array}
\end{array} 
\end{array}
\end{array}
\end{array}
\right]\end{equation}
is valid. We designed this $\boldsymbol{A}$ by aiming at a minimum number of non-zero elements, while satisfying both Lemma~\ref{lem:blockrank} and Lemma~\ref{rankrow}. It can be verified that $\boldsymbol{A}$ has 158 ones and 5942 zeros. Thus, merely 2.6\% of $\boldsymbol{A}$ is non-zero. %We close the example by mentioning that for any parameter setup where $T>M/2$, a valid $\widetilde{\boldsymbol{A}}$ can be constructed in a fashion similar to \eqref{ex1}. Whenever $T<M/2$ we can, according to the comments proceeding Theorem 2, repeat blocks in $\widetilde{\boldsymbol{A}}$, which means that the entire structure in \eqref{ex1} can be repeated.
\end{ex}

Our next example is providing the reader with a graphical illustration of the WAX decomposition.
\begin{ex} 
Let $M=8$, $N=2$, $P=4$, $K=5$, and $L=2$. Thus, $T>4.8$, so we select $T=5$. In this case, the number of variables  and the number of equations associated to the linear system \eqref{eq:AXWH} is $TK+ML=41$ and $MK=40$, respectively; thus, we have precisely one more variable than equations. A particular example of the WAX decomposition for the given parameters is shown in \eqref{ex2}. The strength of the WAX decomposition is that, for any $\boldsymbol{H}$, except for those in a set of measure 0, the matrix $\boldsymbol{A}$ can be kept as it is while only $\boldsymbol{W}$ and $\boldsymbol{X}$ need to change.
\begin{figure*}[b]
\hrulefill
$$ $$
\begin{equation} \label{ex2}
\underbrace{\left[\begin{BMAT}{cccccccc}{cccccccc} \,1&-1&0&0&0&0&0&0 \\
                                          \,2&1&0&0&0&0&0&0 \\
                                       \,   0&0&1&1&0&0&0&0 \\
                                       \,   0&0&0&-1&0&0&0&0 \\
                                      \,    0&0&0&0&2&1&0&0 \\
                                      \,    0&0&0&0&2&-1&0&0 \\
                                      \,    0&0&0&0&0&0&2&-1 \\
                                      \,    0&0&0&0&0&0&2&2
\addpath{(0,6,1)rruulldd}
\addpath{(2,4,1)rruulldd}
\addpath{(4,2,1)rruulldd}
\addpath{(6,0,1)rruulldd}
\end{BMAT} \right]}_{\boldsymbol{W}}
\underbrace{\left[ \begin{BMAT}{ccccc}{cccccccc}1&0&0&0&0 \\
                                          0&1&0&0&0 \\
                                          0&0&1&0&0 \\
                                          0&0&0&1&0 \\
                                          0&0&0&0&1 \\
                                          1&0&0&1&1 \\
                                          0&1&0&1&1 \\
                                          0&0&1&1&1
\end{BMAT} \right]}_{\boldsymbol{A}} 
\underbrace{\left[ \begin{BMAT}{ccccc}{ccccc} 
-2&-1&-1&2&2 \\
1&-2&-1&1&2 \\
1&-1&-2&-1&-2 \\
0&2&0&-1&0 \\
0&-1&2&1&-2  
\end{BMAT} \right]}_{\boldsymbol{X}} =
\underbrace{\left[ \begin{BMAT}{ccccc}{cccccccc} 
-3&1&0&1&0 \\
-2&-4&-3&5&6 \\
1&1&-2&-2&-2 \\
0&-2&0&1&0 \\
-2&-2&5&4&-4 \\
2&-2& 3&0&-4 \\
1&-2&2&3&4 \\
4&-2&2&0&-8 
\end{BMAT} \right]}_{\boldsymbol{H}}
\end{equation}
\end{figure*}
\end{ex}

%
%\begin{figure}[h]
%	\centering
%	\includegraphics[width=0.4\textwidth]{figures/sum_rate_vs_K.eps}
%	\caption{Total sum-rate for ZF detector with both channel estimation methods, LS and MMSE,  according to \eqref{eq:R} versus $\frac{K}{M}$ for M=100. SNR at base-station receiver antenna. $\mathrm{SNR}=h_{\mathrm{pow}}/N_{0}$.}
%	\label{fig:sum_rate}
%\end{figure}
%

\section{Information-loss without WAX decomposition}
\label{section:lossy}
Throughout the previous sections of the paper we have focused on performing information-lossless processing within our framework, i.e., we have focused on the cases where the maximization \eqref{eq:maximize} leads to $I_{\boldsymbol{Z},\boldsymbol{S}}(\boldsymbol{z}; \boldsymbol{s})=I_{\boldsymbol{Y},\boldsymbol{S}}(\boldsymbol{y}; \boldsymbol{s})$. We have defined the WAX decomposition, which allows performing information-lossless processing within our framework. Theorem~\ref{WAXmain} sets the main constraints on the system dimensions for WAX decomposition, and equivalently information-lossless processing, to be possible within our framework. However, it is of great interest to know the information-loss produced when  WAX decomposition is not possible, i.e., we would like to solve \eqref{eq:maximize} when Theorem~\ref{WAXmain} is not satisfied.

Solving \eqref{eq:maximize} is a research challenge in itself which might lead to future work on the topic. In this section we will present initial ideas, as well as numerical results using standard optimization methods, to get an overall understanding of the information-losses that are induced when having lower $T$ than the minimum required from Theorem~\ref{WAXmain}. We will again focus on the general case where $N=L$, and we will only consider randomly chosen $\boldsymbol{A}$ and $\boldsymbol{H}$ matrices.

\subsection{Approximate MF}

A simple first approach, intuitively related to how we compute WAX decomposition, is to work on the minimization problem
\begin{equation}\label{eq:min_norm}
\begin{aligned}
    &\underset{\boldsymbol{X},\{\hat{\boldsymbol{W}}_p\}_{p=1}^P}{ \text{minimize}} & \; \Vert \boldsymbol{A} \boldsymbol{X} - \hat{\boldsymbol{W}}\boldsymbol{H} \Vert^2_{\mathrm{F}}, \\
    &\;\;\;\;\;\;\;\;\;\;\; \mathrm{s.t.} & \mathrm{rank}(\hat{\boldsymbol{W}}_p)=L, \, \forall p \\
    & & \Vert \boldsymbol{X} \Vert + \Vert \hat{\boldsymbol{W}} \Vert = c,
\end{aligned}
\end{equation}
where $\hat{\boldsymbol{W}}$ has the same structure as \eqref{eq:w} for $N=L$ and $P=M/L$, and the actual matrix $\boldsymbol{W}$ to be used in our framework would be obtained as $\boldsymbol{W}=\hat{\boldsymbol{W}}^{-1}$. The last constraint in \eqref{eq:min_norm} ensures a non-zero solution, where the scalar $c$ can be any non-zero real value.\footnote{Note that, when applying $\boldsymbol{W}=\hat{\boldsymbol{W}}^{-1}$, the multiplication associated to \eqref{eq:z_proc} will cancel out any common scaling of $\hat{\boldsymbol{W}}$ and $\boldsymbol{X}$.} The minimization \eqref{eq:min_norm} leads to the WAX decomposition when Theorem~\ref{WAXmain} applies. This means that we could solve both problems using the same approach, and thus, without altering the overall complexity. In case \eqref{eq:cond_wax} is fulfilled, \eqref{eq:min_norm} would give 0, and the solution would also solve the maximization \eqref{eq:maximize}, which is not true in general. 

Solving \eqref{eq:min_norm} can be seen as applying approximate MF within our framework. This minimization can be found in closed-form when $L\leq \min(T,K) $ as we will now prove. Let us rewrite the norm as
\begin{equation}
    \Vert \boldsymbol{A} \boldsymbol{X} - \hat{\boldsymbol{W}}\boldsymbol{H} \Vert^2_{\mathrm{F}} =\sum_{p=1}^P \Vert \boldsymbol{A}_p \boldsymbol{X} - \hat{\boldsymbol{W}}_p\boldsymbol{H}_p \Vert^2_{\mathrm{F}}.
\end{equation}
Assuming the optimum $\boldsymbol{X}$ has been fixed, we would have $\hat{\boldsymbol{W}}_p=\boldsymbol{A}_p \boldsymbol{X}\boldsymbol{H}_p^\dagger$ ($\boldsymbol{H}_p^\dagger$ being the right pseudo-inverse of $\boldsymbol{H}_p$), which is of rank $\min(T,K,L)$ when $\boldsymbol{A}$ and $\boldsymbol{H}$ are randomly chosen. We can restrict ourselves to $L \leq \min(T,K)$ due to its previously mentioned practical interest. In this case, \eqref{eq:min_norm} is solved by considering the equivalent linear system from \eqref{eq:lin_wax}
\begin{equation}
    \Vert \boldsymbol{A} \boldsymbol{X} - \hat{\boldsymbol{W}}\boldsymbol{H} \Vert^2_{\mathrm{F}} = \boldsymbol{u}^H \boldsymbol{B}^H \boldsymbol{B} \boldsymbol{u}.
\end{equation}
The vector $\boldsymbol{u}$ corresponds to the entries of matrices $\boldsymbol{X}$ and $\hat{\boldsymbol{W}}$, so the constraint on the norms of $\hat{\boldsymbol{W}}$ and $\boldsymbol{X}$ corresponds to an arbitrary norm constraint on $\boldsymbol{u}$. With this in mind, the 
solution to the minimization problem is obtained by setting $\boldsymbol{u}$ to be the eigenvector associated to the lowest eigenvalue of $\boldsymbol{B}^H \boldsymbol{B}$. If the conditions for WAX were fulfilled, the lowest eigenvalue of $\boldsymbol{B}^H \boldsymbol{B}$ would be $0$, which would lead to the information-lossless solution.

\subsection{Antenna selection}
Another practical approach that would give a lower bound to the maximum $I_{\boldsymbol{Z},\boldsymbol{S}}(z,s)$ from \eqref{eq:maximize} is to consider antenna selection so as to reduce $M$ until Theorem~\ref{WAXmain} is satisfied. Note that this approach is valid only when $T\geq K$, which was not the case with the previous approach. Furthermore, for ${N>1}$, the antenna selection would correspond to panel selection since we need an integer number of panels, which means we can only reduce $N$ antennas at a time.

How to optimally make the panel selection is also a research problem in itself, but we limit our results to simple selection where the panels having the highest local channel matrix norms are the ones being used.

\subsection{Numerical results}
\begin{defi}Let us define the relative rate for our framework as
\begin{equation}
\mathbb{E}_{\boldsymbol{A},\boldsymbol{H}} \left( \frac{I_{\boldsymbol{Z},\boldsymbol{S}}(z,s)}{I_{\boldsymbol{Y},\boldsymbol{S}}(y,s)}\right),
\end{equation}
where $\boldsymbol{A}$ and $\boldsymbol{H}$ are standard IID Gaussian random matrices.
\end{defi}

Figs. \ref{fig:rate_loss} and \ref{fig:rate_loss_L} show the average relative rates, which have been optimized using norm minimization \eqref{eq:min_norm} and panel selection. We have also included as a comparison the average relative rate obtained by standard brute force numerical optimization using the result from \eqref{eq:min_norm} as starting point. This numerical optimization can give us a hint of what average relative rates could be achieved if more clever optimization approaches were considered. In the plots, we have considered the range of $L$ and $T$ values not accepting a WAX decomposition according to Theorem~\ref{WAXmain}. Fig.~\ref{fig:rate_loss} shows that when we reduce $T$ below $T_\mathrm{opt}$ the performance gets slowly degraded, so the system would still be able to work at acceptable rates even if we cannot perform WAX decomposition. In the case of Fig.~\ref{fig:rate_loss_L}, reducing $L$ below $L_\mathrm{opt}$,\footnote{ $L_\mathrm{opt}$ can be obtained from Theorem~\ref{WAXmain} as a converse of $T_\mathrm{opt}$.} attains a steeper loss, but the degradation is still reasonable. We can also see that the norm minimization associated to \eqref{eq:min_norm} presents a considerable loss with respect to the numerical optimization, but it can still serve as a simple auxiliary method that allows our framework to keep serving users under conditions where WAX is not possible, e.g., if the number of users increase. In fact, \eqref{eq:min_norm} corresponds to the WAX decomposition when the parameters allow it, so that no further processing would be needed in such a system. 

Panel selection, can perform better than the norm minimization in some cases, but the limitation of having to disregard full panels makes it perform slightly worse for most values of $T$ in Fig.~\ref{fig:rate_loss}. Furthermore, this method is slightly less versatile than norm minimization since it is not available for all possible values of $T$, as can be seen in Fig.~\ref{fig:rate_loss}. However, we can note that $T< K$ intuitively translates into having to discard the data from certain users, which could potentially be considered in more complex antenna selection schemes.

\begin{figure}[h]
	\centering
	\includegraphics[scale=0.55]{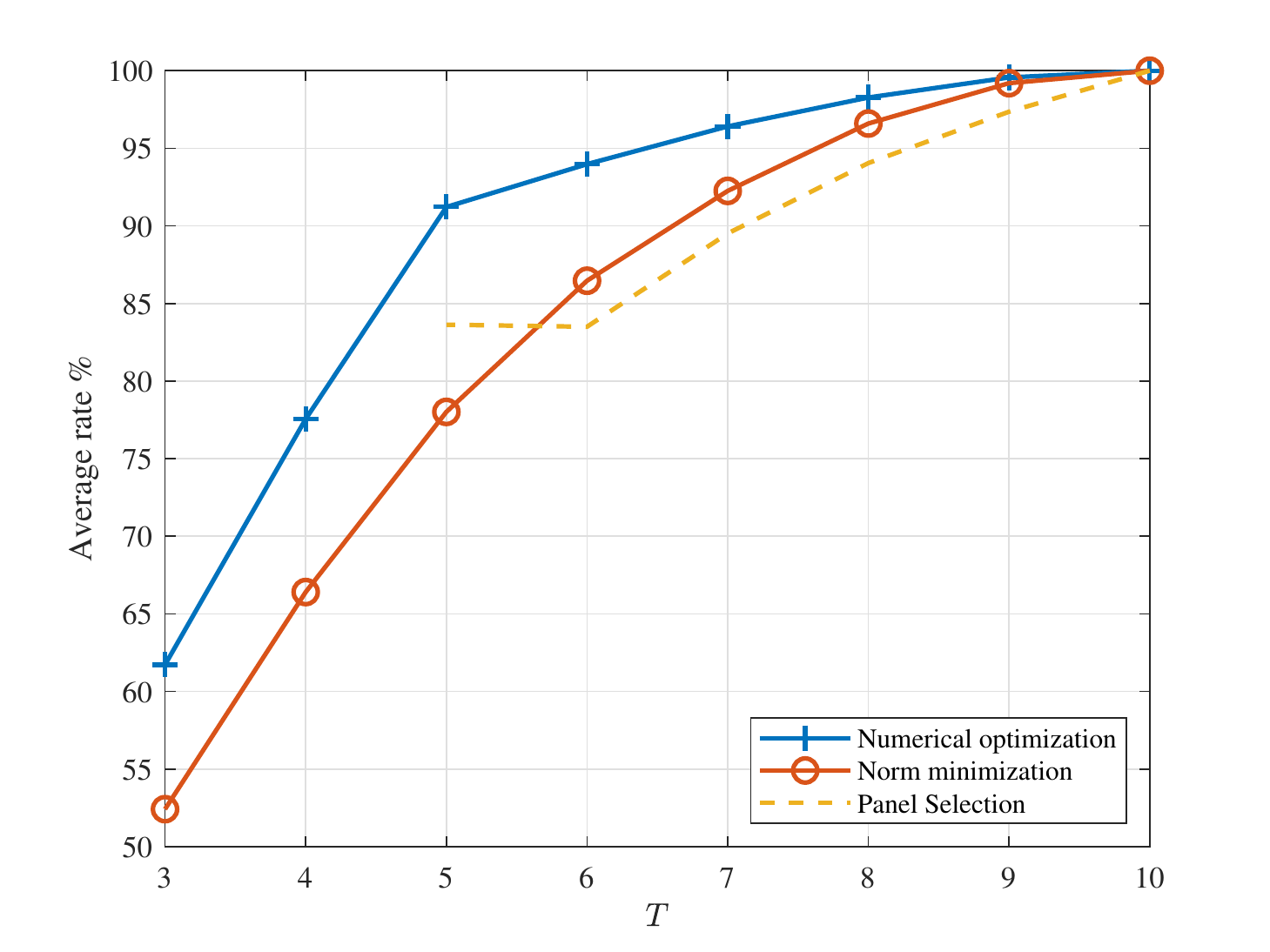}
\caption{Average relative rate (in percentage), $I_{\boldsymbol{Z},\boldsymbol{S}}(z,s)/I_{\boldsymbol{Y},\boldsymbol{S}}(y,z)$, with $\boldsymbol{A}$ and $\boldsymbol{H}$ having IID Gaussian entries, for $M=24$, $K=5$, $L=3$, $T_\mathrm{opt}=10$.}
	\label{fig:rate_loss}
\end{figure}

\begin{figure}[h]
	\centering
	\includegraphics[scale=0.55]{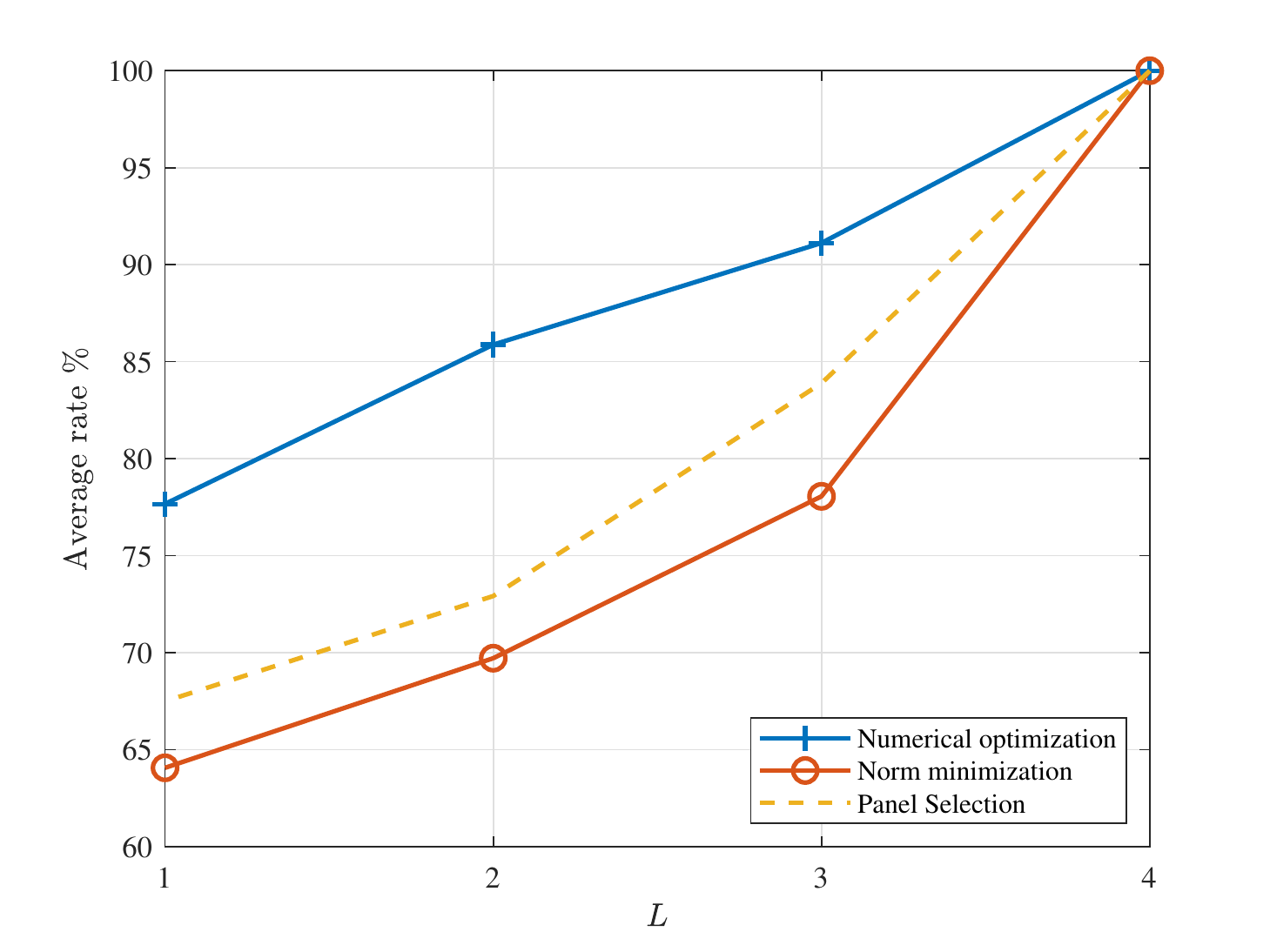}
\caption{Average relative rate (in percentage), $I_{\boldsymbol{Z},\boldsymbol{S}}(z,s)/I_{\boldsymbol{Y},\boldsymbol{S}}(y,z)$, with $\boldsymbol{A}$ and $\boldsymbol{H}$ having IID Gaussian entries, for $M=24$, $K=5$, $T=5$, $L_\mathrm{opt}=4$.}
	\label{fig:rate_loss_L}
\end{figure}

\section{Conclusions}
\label{section:conclusions}
We have introduced a general framework that allows the exploitation of the trade-off between complexity (number of multiplications/outputs per panel) and level of decentralization (connections to CPU) in multi-antenna architectures. We have presented the WAX decomposition, a matrix decomposition that achieves information-lossless processing within our framework under some restrictions. Said restrictions also describe the trade-off between number of multiplications/outputs per panel and number of connections to the CPU if we consider only information-lossless processing. Furthermore, we have studied the problem of finding simple combining networks ($\boldsymbol{A}$ matrices) that admit WAX decomposition within our framework. Finally, we have broadly studied the information-loss produced when the system parameters lead to non-availability of the WAX decomposition. 

Future work could include a deeper study on the cases where WAX decomposition is not available and information-lossy processing has to be applied. This same line would include the design of efficient algorithms and analytical solutions to optimize the achievable information rates. Furthermore, finding sufficient conditions for the combining network (i.e., $\boldsymbol{A}$) to be valid for WAX decomposition still remains unresolved. Other lines of work could include narrowing the study to some of the exceptions that we have not covered such as when the parameters are not divisible, etc.

\section*{Appendix A: Proof of Theorem~\ref{WAXmain}}
We first make the observation that the rank of $\boldsymbol{A}$ cannot be lower than the rank of $\boldsymbol{H}$. The rank of a randomly chosen $\boldsymbol{A}$ is $\min(M,T)$ with probability 1. Assuming that $M\geq K$, this implies  $T\geq K$, expressed as $T> \max(\cdot, K-1)$ in the statement.

We next provide  a lemma that will be useful.
\begin{lem} \label{lemma_t1}
Let $\boldsymbol{W}$ and $\hat{\boldsymbol{W}}$ be two matrices of the same form as $\boldsymbol{W}$ in \eqref{eq:w} with $N=L$ (i.e., they are block diagonal matrices where the blocks are $L\times L$ matrices). If $\boldsymbol{AX}=\hat{\boldsymbol{W}} \boldsymbol{H}$ is solvable such that $\mathrm{det}(\hat{\boldsymbol{W}}_p) > 0, \; 1\leq p \leq P$, then $\boldsymbol{W}\boldsymbol{A}\boldsymbol{X}=\boldsymbol{H}$ is solvable.
\end{lem}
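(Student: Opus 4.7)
The plan is to exploit the block-diagonal structure of $\hat{\boldsymbol{W}}$ together with the determinant hypothesis to invert it block-by-block, and then set $\boldsymbol{W}=\hat{\boldsymbol{W}}^{-1}$. This is essentially the same trick that was used in the reverse direction of Lemma~\ref{lemeqv}, so the argument should be short.

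Concretely, first I would observe that since $\hat{\boldsymbol{W}} = \mathrm{diag}(\hat{\boldsymbol{W}}_1,\dots,\hat{\boldsymbol{W}}_P)$ with $\mathrm{det}(\hat{\boldsymbol{W}}_p)>0$ for all $p$, each block $\hat{\boldsymbol{W}}_p$ is invertible, hence $\hat{\boldsymbol{W}}$ itself is invertible with inverse $\hat{\boldsymbol{W}}^{-1} = \mathrm{diag}(\hat{\boldsymbol{W}}_1^{-1},\dots,\hat{\boldsymbol{W}}_P^{-1})$. The crucial point is that this inverse has precisely the same block-diagonal structure with $L\times L$ blocks required by \eqref{eq:w} under $N=L$, so it is an admissible $\boldsymbol{W}$ within our framework.

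Next, assuming by hypothesis that $\boldsymbol{X}$ solves $\boldsymbol{AX}=\hat{\boldsymbol{W}}\boldsymbol{H}$, I would define $\boldsymbol{W}\triangleq \hat{\boldsymbol{W}}^{-1}$ and multiply both sides of the given equation on the left by $\boldsymbol{W}$, obtaining
\begin{equation}
\boldsymbol{W}\boldsymbol{A}\boldsymbol{X} = \boldsymbol{W}\hat{\boldsymbol{W}}\boldsymbol{H} = \boldsymbol{H},
\end{equation}
with the same $\boldsymbol{X}$ from the hypothesis. This directly exhibits a valid $(\boldsymbol{W},\boldsymbol{X})$ pair solving $\boldsymbol{WAX}=\boldsymbol{H}$, proving solvability.

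There is no real obstacle here; the determinant assumption is tailor-made to guarantee invertibility of each diagonal block, and the structural compatibility of $\hat{\boldsymbol{W}}^{-1}$ with the admissible form of $\boldsymbol{W}$ follows automatically from block-diagonality. The only minor subtlety worth flagging in the write-up is that the condition $\mathrm{det}(\hat{\boldsymbol{W}}_p)>0$ is slightly stronger than needed (strict non-vanishing would suffice), but it is clearly sufficient. The lemma will then serve in Appendix A as the bridge that lets one reduce the proof of Theorem~\ref{WAXmain} to showing existence of an $\hat{\boldsymbol{W}}$ with positive block determinants, which in turn is a genericity statement amenable to analysis via the linear system in \eqref{eq:lin_wax}.
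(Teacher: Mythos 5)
Your proof is correct and follows essentially the same route as the paper's: both invert the block-diagonal $\hat{\boldsymbol{W}}$ (guaranteed by the nonvanishing block determinants), observe that $\hat{\boldsymbol{W}}^{-1}$ retains the admissible block-diagonal structure, and take $\boldsymbol{W}=\hat{\boldsymbol{W}}^{-1}$ so that $\boldsymbol{W}\boldsymbol{A}\boldsymbol{X}=\boldsymbol{H}$ with the same $\boldsymbol{X}$. Your side remark that strict non-vanishing of the determinants would suffice is also accurate.
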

\begin{proof}
Suppose $\boldsymbol{A}\boldsymbol{X}=\hat{\boldsymbol{W}}^H \boldsymbol{H}$ is solvable such that $\mathrm{det}(\hat{\boldsymbol{W}}_p) > 0, \; 1\leq p \leq P$. This implies that $\hat{\boldsymbol{W}}^{-1}$ exists. Thus,
$$\hat{\boldsymbol{W}}^{-1}{\boldsymbol{A}}{\boldsymbol{X}}=\boldsymbol{H}.$$
The lemma follows by observing that $\hat{\boldsymbol{W}}^{-1}$ is of the same form as $\boldsymbol{W}$, so we can take $\boldsymbol{W}=\hat{\boldsymbol{W}}^{-1}$.
\end{proof}

Let us now study $\boldsymbol{AX}=\hat{\boldsymbol{W}} \boldsymbol{H}$. Said matrix equation specifies $MT$ linear equations in $TK+ML$ variables, and hence, it is solvable if $T>M(K-L)/K$, as can be seen from the vectorized version in \eqref{eq:lin_wax}. It remains to show that for randomly chosen $\boldsymbol{A}$ and $\boldsymbol{H}$, the solution satisfies $\mathrm{det}(\hat{\boldsymbol{W}}_p) > 0, \; 1\leq p \leq P$. Let us define $\mathcal{V}$ as the set of admissible solutions, i.e., 
\begin{equation}\begin{aligned}
\mathcal{V}=\{\boldsymbol{A}, \boldsymbol{H} \, | &\, \exists \hat{\boldsymbol{W}}, \boldsymbol{X}: \boldsymbol{AX}=\hat{\boldsymbol{W}} \boldsymbol{H},\\
&\mathrm{det}(\hat{\boldsymbol{W}}_p)\neq 0, \, \forall p, \mathrm{det}(\boldsymbol{B}\boldsymbol{B}^H)\neq 0\},
\end{aligned}
\end{equation}
where $\boldsymbol{B}=[\boldsymbol{B}_1 \boldsymbol{B}_2]$ is again the matrix associated to the equivalent linear system \eqref{eq:lin_wax}, which is given by \eqref{eq:B1}. Assuming that $\boldsymbol{B}$ is full-rank, the solution to $\boldsymbol{AX}=\hat{\boldsymbol{W}} \boldsymbol{H}$ depends on $F=TK+ML-MK$ free variables, here denoted by $\{z_f\}$. 
The solution $\{\hat{\boldsymbol{W}}\}_{i,j}$ is a linear combination of the free variables $\{z_f\}$ where the weights depend on $\boldsymbol{A}$ and $\boldsymbol{H}$, i.e.,
$$\{\hat{\boldsymbol{W}}_p\}_{i,j}=\sum_{f=1}^F c_{p,i,j,f}(\boldsymbol{A},\boldsymbol{H})z_f.$$
Note that the number of free variables, $F$, can be increased if $\boldsymbol{B}$ is not full-rank (leading to a different polynomial expression), that is why we are only interested in the solutions giving a full-rank $\boldsymbol{B}$, i.e., $\mathrm{det}(\boldsymbol{B}\boldsymbol{B}^H)\neq 0$. Thus, the following lemma will come in handy.

%%%% REVIEW %%%%%%%%%%%%%%%%%%%%%%%%%%%%%%%%%%%%%%%%%%%%%%%%%%%%%%%%%%%%%%%%%%%%%%%%%%%%%%%%
\begin{lem} \label{lem:det_B}
Given randomly chosen matrices $\boldsymbol{H}$ and $\boldsymbol{A}$, the matrix $\boldsymbol{B}=[\boldsymbol{B}_1 \boldsymbol{B}_2]$ ($\boldsymbol{B}_1$ and $\boldsymbol{B}_2$ are defined as in \eqref{eq:B1}) fulfills $\mathrm{det}(\boldsymbol{B}\boldsymbol{B}^H)\neq 0$ with probability 1.
\end{lem}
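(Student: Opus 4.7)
The plan is to invoke the classical polynomial-identity principle. The quantity $\det(\boldsymbol{B}\boldsymbol{B}^H)$ is a polynomial in the real and imaginary parts of the entries of $\boldsymbol{A}$ and $\boldsymbol{H}$, so it is either identically zero on $\mathbb{C}^{M\times T}\times\mathbb{C}^{M\times K}$ or its vanishing set has Lebesgue measure zero. It therefore suffices to exhibit a single pair $(\boldsymbol{A}_0,\boldsymbol{H}_0)$ for which $\boldsymbol{B}$ attains full row rank $MK$; the probability-one conclusion follows immediately.

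Before constructing a witness I would verify the obvious dimension check: under the hypothesis $T>M(K-L)/K$ one has $TK+ML>MK$, so $\boldsymbol{B}$ has strictly more columns than rows and full row rank is at least combinatorially possible. Next I would translate full row rank into a condition on the cokernel. Identifying $\boldsymbol{v}\in\mathbb{C}^{MK}$ with $\boldsymbol{V}\in\mathbb{C}^{M\times K}$ through inverse vectorization, and partitioning $\boldsymbol{V}=[\boldsymbol{V}_1^T,\ldots,\boldsymbol{V}_P^T]^T$ with $\boldsymbol{V}_p\in\mathbb{C}^{L\times K}$ matching the block structure of $\boldsymbol{W}$, a direct calculation using $(\mathbf{I}_K\otimes\boldsymbol{A})\mathrm{vec}(\boldsymbol{X})=\mathrm{vec}(\boldsymbol{A}\boldsymbol{X})$ and $(\boldsymbol{H}^T\otimes\mathbf{I}_M)\mathbf{P}\boldsymbol{u}_W=\mathrm{vec}(\boldsymbol{W}\boldsymbol{H})$ shows that $\boldsymbol{B}^H\boldsymbol{v}=\boldsymbol{0}$ is equivalent to the two joint conditions $\boldsymbol{A}^H\boldsymbol{V}=\boldsymbol{0}$ and $\boldsymbol{V}_p\boldsymbol{H}_p^H=\boldsymbol{0}$ for every $p=1,\ldots,P$. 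Hence $\boldsymbol{B}$ has full row rank if and only if these constraints jointly force $\boldsymbol{V}=\boldsymbol{0}$.

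It remains to construct a witness. The easy sub-case is $T\geq M$: picking $\boldsymbol{A}_0$ of rank $M$ already makes $\boldsymbol{B}_1=\mathbf{I}_K\otimes\boldsymbol{A}_0$ of rank $KM$, so $\boldsymbol{B}$ has full row rank without any contribution from $\boldsymbol{B}_2$. The substantive sub-case is $T<M$, where $\ker\boldsymbol{A}_0^H$ is a nontrivial $(M-T)$-dimensional subspace and the block constraints from $\boldsymbol{H}_0$ must be used to rule out the remaining directions. Here I would pick $\boldsymbol{A}_0$ and $\boldsymbol{H}_0$ so that $\ker\boldsymbol{A}_0^H\subseteq\mathbb{C}^M$ and the $P$ subspaces $\ker\boldsymbol{H}_{0,p}\subseteq\mathbb{C}^K$ are in general position. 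The column-wise condition $\boldsymbol{A}_0^H\boldsymbol{V}=\boldsymbol{0}$ imposes $TK$ linear constraints on the $MK$ entries of $\boldsymbol{V}$ and the row-wise conditions $\boldsymbol{V}_p\boldsymbol{H}_{0,p}^H=\boldsymbol{0}$ impose $ML$ further constraints, totalling $TK+ML>MK$ constraints.

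The main obstacle is showing that these $TK+ML$ constraints are jointly of rank $MK$: the row-wise constraints act independently block-by-block, while the column-wise constraint couples all blocks, so the combined pattern is not immediately transparent from counting alone. I would handle this by an inductive block-by-block argument: process the blocks $p=1,\ldots,P$ one at a time and track how the residual column-wise freedom in $\boldsymbol{V}$ shrinks as each successive row-wise constraint is imposed, using genericity of the chosen kernels to guarantee that each step removes the expected number of extra degrees of freedom until no nonzero $\boldsymbol{V}$ survives. Equivalently, one can specialize to $\boldsymbol{A}_0$ and $\boldsymbol{H}_0$ with a staircase support pattern whose combined null-space structure can be read off directly, obtaining $\boldsymbol{V}=\boldsymbol{0}$ by inspection. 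Either route yields $\det(\boldsymbol{B}\boldsymbol{B}^H)\neq 0$ at the witness, which by the polynomial principle promotes to probability one.
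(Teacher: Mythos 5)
Your overall strategy coincides with the paper's: $\det(\boldsymbol{B}\boldsymbol{B}^H)$ is a polynomial in the entries of $\boldsymbol{A}$ and $\boldsymbol{H}$, so it either vanishes identically or vanishes only on a measure-zero set, and a single witness suffices. Your cokernel reformulation (full row rank of $\boldsymbol{B}$ iff the conditions $\boldsymbol{A}^H\boldsymbol{V}=\boldsymbol{0}$ and $\boldsymbol{V}_p\boldsymbol{H}_p^H=\boldsymbol{0}$ for all $p$ jointly force $\boldsymbol{V}=\boldsymbol{0}$) is correct and is a cleaner statement of what must be checked than the paper provides, and your dispatch of the case $T\geq M$ via $\boldsymbol{B}_1$ alone is valid. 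You are also right, where the paper is imprecise, that the zero set of a nontrivial polynomial has Lebesgue measure zero rather than being countable.

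There is, however, a genuine gap: in the substantive case $T<M$ you never actually produce the witness. Writing $\boldsymbol{V}=\boldsymbol{N}\boldsymbol{C}$ with $\boldsymbol{N}$ an $M\times(M-T)$ basis of $\ker\boldsymbol{A}_0^H$, what remains is to show that the linear map $\boldsymbol{C}\mapsto\left(\boldsymbol{N}_p\boldsymbol{C}\boldsymbol{H}_{0,p}^H\right)_{p=1}^P$ is injective for some concrete choice of $\boldsymbol{A}_0,\boldsymbol{H}_0$; the count $ML>(M-T)K$ (equivalent to $T>M(K-L)/K$) only makes injectivity possible, not automatic, because the constraints have a structured bilinear form. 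Your appeal to ``genericity of the chosen kernels'' at exactly this point is circular: that generic position yields full rank is precisely the content of the lemma for this constraint system, so it cannot be invoked to prove it. The ``staircase support pattern'' alternative could in principle close the gap, but it is only named, never executed. By contrast, the paper commits to an explicit witness---a randomly chosen $\boldsymbol{A}$ with $\boldsymbol{H}$ equal to its first $K$ columns, for which $\hat{\boldsymbol{W}}=\mathbf{I}_M$ solves the system; its own verification is admittedly terse, but your proposal stops one step earlier, at the point where the real work of the proof begins.
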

\begin{proof}
We can define the determinant $\mathrm{det}(\boldsymbol{B}\boldsymbol{B}^H)$ as a polynomial expression of the form
$$ \mathrm{det}(\boldsymbol{B}\boldsymbol{B}^H) = \sum_g b_g(\boldsymbol{A},\boldsymbol{H}) \prod_{i,j,l,k} \{ \boldsymbol{H}\}_{i,j}^{h_{i,j,g}} \{ \boldsymbol{A}\}_{l,k}^{a_{l,k,g}}.
$$
This polynomial expression will evaluate to 0 only for a countable set (which thus attains probability 0) of $\boldsymbol{A}$ and $\boldsymbol{H}$ matrices if we can find at least an $\boldsymbol{A}$ and an $\boldsymbol{H}$ such that $\mathrm{det}(\boldsymbol{B}\boldsymbol{B}^H) \neq 0$. One example of this is when $\boldsymbol{A}$ is chosen randomly and we have $\{\boldsymbol{H} \}_{i,j}=\{\boldsymbol{A}\}_{i,j}$, $i\in \{ 1,\dots M \}$, $j \in \{ 1,\dots, K\}$.
\end{proof}

A similar reasoning to the one used in the proof of Lemma~\ref{lem:det_B} can be applied to show that for randomly chosen $\boldsymbol{A}$ and $\boldsymbol{H}$ we have $\mathrm{\hat{\boldsymbol{W}}}\neq 0$ with probability 1. The determinant $\mathrm{det}(\hat{\boldsymbol{W}}_p)$ can be written as a polynomial combination of the previous $\{\hat{\boldsymbol{W}}_p\}_{i,j}$. Thus we can express it as
\begin{equation}\label{eq:det_W}
\mathrm{det}(\hat{\boldsymbol{W}}_p) = \sum_{g=1}^G\tilde{c}_g(\boldsymbol{A},\boldsymbol{H}) \prod_{f=1}^F z_f^{q_{g,f}},
\end{equation}
for some $G$, with $\sum_{f=1}^{F} q_{g,f} = P, \; \forall g$. Thus, the only possibility for having $\mathrm{det}(\hat{\boldsymbol{W}}_p)=0$ is if the coefficients are zero, i.e., $\tilde{c}_g(\boldsymbol{A},\boldsymbol{H})=0, \, 1\leq g \leq G$. However, the coefficients $\tilde{c}_g(\boldsymbol{A},\boldsymbol{H})$ are rational expressions of the entries in $\boldsymbol{A}$ and $\boldsymbol{H}$. This means that, in order to have $\tilde{c}_g(\boldsymbol{A},\boldsymbol{H})=0$, a polynomial multi-variate expression of the entries in $\boldsymbol{A}$ and $\boldsymbol{H}$ must be 0. Again, this can only happen at most in a countable set of $\boldsymbol{A}$ and $\boldsymbol{H}$ as long as we find an $\boldsymbol{A}$ and an $\boldsymbol{H}$ such that $\mathrm{det}(\hat{\boldsymbol{W}}_p),\, \forall p$, while still assuring that $\boldsymbol{B}$ is full-rank so that $F$ remains fixed. The same example as in the proof of Lemma~\ref{lem:det_B}, i.e., randomly chosen $\boldsymbol{A}$ and $\{\boldsymbol{H} \}_{i,j}=\{\boldsymbol{A}\}_{i,j}$, $i\in \{ 1,\dots M \}$, $j \in \{ 1,\dots, K\}$, gives the trivial solution $\hat{\boldsymbol{W}}=\mathbf{I}_M$, and thus fulfills both conditions. Therefore, we have proved that a randomly chosen $\boldsymbol{A}$ and $\boldsymbol{H}$ will be in the set $\mathcal{V}$ with probability 1.

%A standard result is that, whenever $\mathcal{V}\neq \emptyset$, the set $\mathcal{V}$ is open dense in $\mathbb{C}^{TK+ML}$. This, in turn, implies that a randomly chosen tuple $(\boldsymbol{A}, \boldsymbol{H})$ is in $\mathcal{V}$ with probability 1 as long as we can find at least one case where $\tilde{c}_g(\boldsymbol{A},\boldsymbol{H})\neq 0$. The trivial example where this doesn't happen is if we obtain a random $\boldsymbol{H}$ by randomly selecting blocks $\boldsymbol{W}_p$ and matrices $\boldsymbol{X}$ and $\boldsymbol{A}$.
 %%%%%%%%%%%%

%However, the coefficients $\tilde{c}_{g}(\boldsymbol{A},\boldsymbol{H})$ are rational expressions of the entries in $\boldsymbol{A}$ and $\boldsymbol{H}$, which means that in order for $\tilde{c}_{g}(\boldsymbol{A},\boldsymbol{H})=0$, a polynomial multi-variate expression in the entries in $\boldsymbol{A}$ and $\boldsymbol{H}$ must be 0. A standard result from Zariski topology states that whenever $\mathcal{V}\neq \emptyset$, the set $\mathcal{V}$ is open dense in $\mathbb{C}^{TK+ML}$. This, in turn, implies that a randomly chosen tuple $(\boldsymbol{A}, \boldsymbol{H})$ is in $\mathcal{V}$ with probability 1.

\section*{Appendix B: Proof of Lemma~\ref{rankrow}}
From the structure of $\boldsymbol{B}_1$ in \eqref{eq:B1}, we observe that a particular row of $\boldsymbol{A}$ appears exactly in $K$ rows of $\boldsymbol{B}$. Let us denote $\boldsymbol{B}_0$ as the submatrix of $\boldsymbol{B}$ formed by all rows in $\boldsymbol{B}$ where the rows of $\boldsymbol{A}_0$ appear. Clearly, to satisfy \eqref{eq:lin_wax}, we must in particular satisfy $\boldsymbol{B}_0 \boldsymbol{u}=\boldsymbol{0}_{RK\times 1}$. Now, $\boldsymbol{B}_0$ reads
$$\boldsymbol{B}_0=\left[\boldsymbol{I}_K \otimes \boldsymbol{A}_0 \;\;  \widehat{\boldsymbol{H}}_0\right],$$
where $\widehat{\boldsymbol{H}}_0$ is formed from $\boldsymbol{H}$ as follows: Let $\iota(r)$ denote the block $\boldsymbol{H}_{\iota(r)}$ where the $r$th row in $\boldsymbol{A}_0$ is taken from. Let $\boldsymbol{H}_0=\left[ \boldsymbol{H}_{\iota(1)}^T \; \boldsymbol{H}_{\iota(2)^T} \, \dots \, \boldsymbol{H}_{\iota(R)}^T \right]^T$, and let $\mathbb{I}\mathbb{I}(\ell)$ be an $R\times L$ matrix  with a single entry equal to 1 at row $\ell$ and column $(\iota(\ell) \;\mathrm{mod} \; L)+1$, and all other equal to 0. Then,
\begin{eqnarray} \label{hatH} \widehat{\boldsymbol{H}}_0&=&\left[\boldsymbol{0}_{\mathcal{D}_0} \;\; \boldsymbol{H}_{\iota(1)}^H \!\otimes\! \mathbb{I}\mathbb{I}(1) \; \; \boldsymbol{0}_{\mathcal{D}_1}  \;\; \;\boldsymbol{H}_{\iota(2)}^H \!\otimes\! \mathbb{I}\mathbb{I}(2) \; \right. \nonumber \\
&& \;\quad \quad \quad \left. \dots \quad \boldsymbol{0}_{\mathcal{D}_{R-1}}  \;\; \boldsymbol{H}_{\iota(R)}^H\! \otimes \!\mathbb{I}\mathbb{I}(R)  \;\; \boldsymbol{0}_{\mathcal{D}_R} \right]\end{eqnarray}
where we have used the shorthand notation 
\begin{eqnarray} 
&&\mathcal{D}_k=RK\times (\iota(k+1)-\iota(k))L^2, \nonumber \\
&&\iota(0)\triangleq 1,  \;\;\;\; \iota(R+1)\triangleq M/L. \nonumber 
\end{eqnarray}

To study the null space of $\boldsymbol{B}_0$ we may just as well study the null space of $(\boldsymbol{I}_K \otimes \boldsymbol{Q}_0^H)\boldsymbol{B}_0$, where $\boldsymbol{Q}_0\boldsymbol{R}_0 = \boldsymbol{A}_0$ is the QR decomposition of $\boldsymbol{A}_0$. We have, 
\begin{equation} \label{appb:qr}
(\boldsymbol{I}_K \otimes \boldsymbol{Q}_0^H)\boldsymbol{B}_0= \left[\boldsymbol{I}_K \otimes {\boldsymbol{R}}_0 \; \; (\boldsymbol{I}_K \otimes \boldsymbol{Q}_0^H)\widehat{\boldsymbol{H}}_0\right].\end{equation} Let $\kappa= \mathrm{rank}(\boldsymbol{A}_0)$. The matrix $\boldsymbol{I}_K \otimes {\boldsymbol{R}}_0$ consequently has $K(R-\kappa)$ all-zero rows. 

If we extract said all-zero rows, we obtain,$$
\left[\boldsymbol{0}_{K\!(\!R-\kappa)\times TK} \; \boldsymbol{P}(\boldsymbol{I}_K \otimes \boldsymbol{Q}_0^H)\widehat{\boldsymbol{H}}_0\right] \!\!\begin{bmatrix}
\mathrm{vec}({\boldsymbol{X}})\\
\mathrm{vec}(\boldsymbol{W}_1)\\
\vdots \\
\mathrm{vec}(\boldsymbol{W}_P)
\end{bmatrix}\!=\! \boldsymbol{0}_{K\!(\!R-\kappa)\times 1} 
$$
where $\boldsymbol{P}$ is a $K(R-\kappa)\times KR$ matrix that extracts the rows where $\boldsymbol{I}_K \otimes {\boldsymbol{R}}_0$ is all-zero. This implies that we can discard $\boldsymbol{X}$ so that we equivalently obtain
\begin{equation} \label{imop2}  \boldsymbol{P}(\boldsymbol{I}_K \otimes \boldsymbol{Q}_0^H)\widehat{\boldsymbol{H}}_0\begin{bmatrix}
\mathrm{vec}(\boldsymbol{W}_1)\\
\vdots \\
\mathrm{vec}(\boldsymbol{W}_P)
\end{bmatrix} = \boldsymbol{0}_{K(R-\kappa)\times 1}.\end{equation}

We next note that, due to the many all-zero columns in $\widehat{\boldsymbol{H}}_0$ (represented by the terms $\boldsymbol{0}_{\mathcal{D}_k}$ in \eqref{hatH}), not all the $\boldsymbol{W}_p$ matrices matter. In fact, it can be straightforwardly verified that \eqref{imop2} is equivalent to
\begin{equation} \label{imop3} \boldsymbol{P}(\boldsymbol{I}_K \otimes \boldsymbol{Q}_0^H)\bar{\boldsymbol{H}}_0 \begin{bmatrix}
\boldsymbol{w}_{\iota(1)} \\
\vdots \\
\boldsymbol{w}_{\iota(R)} 
\end{bmatrix} = \boldsymbol{0}_{K(R-\kappa)\times ML},\end{equation}
where $\boldsymbol{w}_{m}$ is the $1\times L$ vector formed from extracting the entries at the $m$th row of $\boldsymbol{W}$ that are allowed to take non-zero values, and
\begin{equation} \label{barH} \bar{\boldsymbol{H}}_0=\left[ \boldsymbol{H}_{\iota(1)}^H \!\otimes\! \mathbb{I}(1) \;  \;\boldsymbol{H}_{\iota(2)}^H \!\otimes\! \mathbb{I}(2)  \;\dots \; \boldsymbol{H}_{\iota(R)}^H\! \otimes \!\mathbb{I}(R) \right],\end{equation}
where $\mathbb{I}(\ell)$ is the non-zero column of $\mathbb{I}\mathbb{I}(\ell)$.

For randomly chosen $\boldsymbol{H}$, the matrix $\boldsymbol{P}(\boldsymbol{I}_K \otimes \boldsymbol{Q}_0^H)\bar{\boldsymbol{H}}_0$ is full rank with probability 1. Therefore, \eqref{barH} only has a non-trivial solution whenever the number of unknowns is larger than the number of equations, i.e., whenever, $RL > K(R-\kappa)$. Consequently, 
$$\kappa > R\frac{K-L}{K}.$$

%%%%

\section*{Appendix C: Necessary condition for $\boldsymbol{A}$}
As per previous results, a matrix $\H$ can, with probability 1, be decomposed as  $\H=\WAX$ if and only if there exists a full rank matrix $\hat{\W}$ such that $ \AX=\hat{\W}\H$. We are now interested in establishing necessary conditions for the matrix $\A$ so that this is possible. Let $\As$ denote a submatrix of $\A$ comprising an arbitrary selection of rows in $\A$ with $\ra{\As}=r.$ An immediate consequence is that $\ra{\As \X}\leq r.$
Assuming that $\H=\WAX$ holds,  $\ra{\hat{\Ws}\H}=\ra{\As \X}$,  where $\hat{\Ws}$ is a submatrix of $\hat{\W}$ comprising rows corresponding to those in $\As$. On the other hand, if no $\hat{\Ws}$ exists such that $\ra{\hat{\Ws}\H} \leq r$, we can infer that $\A$ does not allow for a WAX decomposition, i.e., there are no matrices $\W$ and $\X$ such that $\H=\WAX$. Therefore, a necessary condition on $\A$ for the existence of a WAX decomposition is
 \be \ra{\As} \geq \min_{\hat{\Ws} \in \mathcal{W}} \ra{\hat{\Ws}\H},\quad \forall \mathcal{S},  \label{eq:rankproof1} \ee
where we will define the set $\mathcal{W}$ after having introduced further notation. The matrix $\A$ contains $M$ rows, and since $L$ divides $M$, $\A$ contains $P=M/L$ blocks of $L$ rows.  Let the $1\times P$ vector $\vec{a}$ denote the number of rows in $\As$ taken from the $p$th block in $\A$, and let ${\As}_{\!,p}$  and ${\hat{\Ws}}_{\!,p}$ be the $a_p \times T$ and $a_p \times M$ submatrices  of $\As$ and $\hat{\Ws}$, respectively, formed  from these $a_p$ rows. With that, $\mathcal{W}$ is the set containing all block-diagonal matrices where the $p$th block is of dimensions $a_p\times K$ and has rank $a_p$ (the latter is needed to ensure that the overall matrix $\hat{W}$ is invertible).

We are now ready to study \eqref{eq:rankproof1}. Suppose that the minimum of \eqref{eq:rankproof1} is $r$. Since $\hat{\Ws}\H$ has dimensions $(\sum_p a_p) \times K$ this implies that the null-space of $\hat{\Ws}\H$ has dimension $K-r$. Thus, if  $\ra{\hat{\Ws}\H}=r$, there must exist a $K\times (K-r)$ matrix $\vec{N}$ such that $\hat{\Ws}\H\vec{N}=\vec{0}$. Recalling that $\hat{\Ws}$ is block-diagonal with each block being of dimension $a_p \times L$ and having rank $a_p$, it follows that 
\be \ra{\H _p \vec{N}} \leq L-a_p \label{eq:rankproof2} \ee
where $\H _p$ is of dimension $L\times K$ and  $\H=[\vec{H}_1^\mathrm{H} \; \cdots \; \vec{H}_P^\mathrm{H}]^\mathrm{H}$. 
Thus, if we want to solve the minimization in \eqref{eq:rankproof1}, we have the equivalent problem of finding the maximum possible number of columns in the matrix $\vec{N}$ so that it is full rank and  $ \ra{\H _p \vec{N}} \leq L-a_p, \; p=1\ldots P$.

We next define the rank-profile $\vec{J}$ of a matrix $\vec{N}$ as  ${J_{b,p}=\ra{\H _p \vec{N}_{1:b}}-\ra{\H _p \vec{N}_{1:b-1}}}$, where $\vec{N}_{1:b}$ denotes columns $1, 2, \ldots , b$ of $\vec{N}$. We remark that $J_{b,p}\in\{0,1\}$, and that $\vec{J}$ depends on $\vec{H}$, although our notation does not indicate this dependency. In the following lemma, we characterize the admissible rank-profiles.
\begin{lem}\label{lem:rank_prof}
There exists a full rank matrix $\vec{N}$ with $B$ columns such that $ \ra{\H _p \vec{N}} \leq L-a_p$ if and only if the rank-profile matrix $\vec{J}$ satisfies
$$\begin{aligned}
K\!-\!Ld_{<L}(\bar{\vec{a}})+\!\!\sum_{b^\prime =1}^{b-1}\sum_{p=1}^P J_{b^\prime,p} + \!\! \sum_{p=1}^P \!  J_{b,p} \! \left(L-\sum_{b^\prime =1}^{b-1}   J_{b^\prime,p}    \right) \! \geq       b ,&\\
 1\leq b \leq B,&
\end{aligned}$$
$$ \sum_{b=1}^B J_{b,p}\leq \bar{a}_p,\quad 1\leq p \leq P,$$
where $d_{<L}(\cdot)$ denotes the number  of elements of its argument that are less than $L$, and $\bar{a}_p=L-a_p$.
\end{lem}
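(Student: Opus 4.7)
The plan is to prove the iff by building $\vec{N}$ column by column and relating admissibility of the rank profile to a dimension count at each step. For each $b \leq B$, write $V_{p,b} := \mathrm{span}(\vec{H}_p \vec{N}_{1:b-1})$, with $r_{p,b} := \dim V_{p,b} = \sum_{b'<b} J_{b',p}$, and let $U_b$ denote the subspace of $\mathbb{C}^K$ consisting of vectors $\vec{n}$ for which $\vec{H}_p \vec{n} \in V_{p,b}$ for every active index $p$ with $J_{b,p} = 0$. By construction, $\mathrm{span}(\vec{N}_{1:b-1}) \subseteq U_b$, and any new column $\vec{n}_b$ realizing the prescribed profile must (a) lie in $U_b$, (b) not lie in $\mathrm{span}(\vec{N}_{1:b-1})$, and (c) not lie in $\vec{H}_p^{-1}(V_{p,b})$ for any $p$ with $J_{b,p}=1$.

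For the necessity direction, suppose such an $\vec{N}$ exists with profile $\vec{J}$. Condition (ii) follows directly because $\sum_b J_{b,p} = \mathrm{rank}(\vec{H}_p \vec{N}) \leq \bar{a}_p$ by hypothesis. For condition (i), the columns $\vec{n}_1, \dots, \vec{n}_b$ are $b$ linearly independent vectors lying in $U_b$, so $\dim U_b \geq b$. For generic $\vec{H}$, the preimages $\vec{H}_p^{-1}(V_{p,b})$ impose independent codimension-$(L - r_{p,b})$ conditions; rank-saturated blocks (where $r_{p,b} = L$) and blocks with $\bar{a}_p = L$ (which are unconstrained, since $\mathrm{rank}(\vec{H}_p \vec{N}) \leq L$ holds automatically) do not contribute. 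This yields an explicit formula for $\dim U_b$ which, after algebraic rearrangement, produces the stated inequality with the term $K - L\, d_{<L}(\bar{\vec{a}})$.

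For the sufficiency direction, construct $\vec{N}$ inductively on $b$. Given $\vec{N}_{1:b-1}$ realizing the partial profile, condition (i) ensures $\dim U_b \geq b$, so $U_b$ strictly contains the $(b-1)$-dimensional subspace $\mathrm{span}(\vec{N}_{1:b-1})$. For each $p$ with $J_{b,p} = 1$, condition (ii) together with $J_{b,p} = 1$ yields $r_{p,b} \leq \bar{a}_p - 1 \leq L - 1$, so that $V_{p,b}$ is a proper subspace of $\mathrm{Im}(\vec{H}_p)$ and hence $U_b \cap \vec{H}_p^{-1}(V_{p,b})$ is a proper subspace of $U_b$. Because $\mathbb{C}$ is infinite, $U_b$ is not covered by the finite union of these proper subspaces together with $\mathrm{span}(\vec{N}_{1:b-1})$, so a valid $\vec{n}_b$ exists. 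Iterating up to $b = B$ produces the required $\vec{N}$.

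The main obstacle is justifying the generic-dimension formula for $\dim U_b$, and in particular correctly tracking the contribution of blocks with $\bar{a}_p = L$ so that the term $L\, d_{<L}(\bar{\vec{a}})$ appears in precisely the stated form. Two points require care: first, the codimensions of the individual preimage constraints add for generic $\vec{H}$, which follows from a Zariski-genericity argument applied to the appropriate determinantal loci; second, the unconstrained blocks admit a natural maximal rank-profile (namely $J_{b,p}=1$ until saturation) that must be folded into the bookkeeping to match the stated expression. A secondary technicality is ensuring the generic-position argument holds uniformly across all $b \leq B$, which can be handled by taking a finite union of measure-zero exclusion sets at the outset.
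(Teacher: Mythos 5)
Your proposal follows essentially the same route as the paper's proof: a column-by-column construction in which, at step $b$, the admissible set for $\vec{n}_b$ is the intersection of the preimages $\boldsymbol{H}_p^{-1}(\mathrm{span}(\boldsymbol{H}_p\boldsymbol{N}_{1:b-1}))$ over the blocks with $J_{b,p}=0$, with the stated inequality obtained by requiring this intersection to have dimension at least $b$ under additivity of the codimensions, and with the unconstrained blocks ($\bar{a}_p=L$) excluded to produce the $d_{<L}(\bar{\boldsymbol{a}})$ term. The one step you flag as the main obstacle---that the codimension conditions add for generic $\boldsymbol{H}$ even though the subspaces $V_{p,b}$ depend on the chosen columns---is asserted "by inspection" in the paper as well, so your argument is at the same level of rigor (and slightly more explicit on the sufficiency direction, via the finite-union-of-proper-subspaces selection of $\vec{n}_b$).
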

\begin{proof}
We first note that if $a_p=0$ for any $p$, then the condition in \eqref{eq:rankproof2} is trivially satisfied for any $\vec{N}$ so we can without loss of generality assume that $a_p>0,\; \forall p$. From the definition of $J_{b,p}$ if follows that $\ra{\H _p \vec{N}}=\sum_{b=1}^B J_{b,p}$; thus, the second set of conditions in the lemma is trivial.

Consider now a specific value $b$. If $J_{b,p}=0$, it follows that column $b$ of $\vec{N}$, $\vec{n}_b$ is restricted to
$$\vec{n}_b\in \mathcal{I}_{b,p}=\mathcal{N}(\H _p)\cup \mathrm{span}\left(\H _p ^+ \H _p \vec{N}_{1:b-1}\right),$$
while $J_{b,p}=1$ does not restrict $\vec{n}_b$ so that $\vec{n}_b\in \mathcal{I}_{b,p}=\mathbb{C}^K.$
By inspection, it can be seen that the dimensionality of $\mathcal{I}_{b,p}$ can be written as
$$\mathrm{dim}(\mathcal{I}_{b,p})=K(1-J_{b,p})+J_{b,p}\left(K-L+\sum_{b^\prime=1}^{b-1} J_{b,p}\right).$$
Altogether, we have that $$\vec{n}_b \in \bigcap_{p=1}^{P} \mathcal{I}_{b,p}.$$
The dimensionality of the intersection satisfies
$$\mathrm{dim}\left( \bigcap_{p=1}^{P} \mathcal{I}_{b,p}\right) = \sum_{p=1}^P \mathrm{dim}(\mathcal{I}_{b,p}) -(P-1)K.$$ 
This dimensionality must be at least b, since there are already $b-1$ vectors in $\vec{N}_{1:b-1}$ in the same space, and the rank of $\vec{N}_{1:b}$ must be full for all $b$. By manipulation, the statement of the lemma can be obtained.
\end{proof}

Altogether, a necessary condition on $\A$ is that the rank of any submatrix $\As$ satisfies
$\ra{\As}\geq K-B_{\mathcal{S}}$, where $B_{\mathcal{S}}$ is the largest integer $B$ that satisfies the conditions specified in Lemma~\ref{lem:rank_prof}; note that the vector $\vec{a}$ in the conditions depends on $\mathcal{S}$. It can be seen from Lemma~\ref{lem:rank_prof} that finding the largest integer $B$ that satisfies the conditions is a non-linear (quadratic) integer problem and we have not been able to solve it in closed form. 
\bibliographystyle{IEEEtran}
\bibliography{IEEEabrv,wax}

% Generated by IEEEtran.bst, version: 1.14 (2015/08/26)
\begin{thebibliography}{10}
\providecommand{\url}[1]{#1}
\csname url@samestyle\endcsname
\providecommand{\newblock}{\relax}
\providecommand{\bibinfo}[2]{#2}
\providecommand{\BIBentrySTDinterwordspacing}{\spaceskip=0pt\relax}
\providecommand{\BIBentryALTinterwordstretchfactor}{4}
\providecommand{\BIBentryALTinterwordspacing}{\spaceskip=\fontdimen2\font plus
\BIBentryALTinterwordstretchfactor\fontdimen3\font minus
  \fontdimen4\font\relax}
\providecommand{\BIBforeignlanguage}[2]{{%
\expandafter\ifx\csname l@#1\endcsname\relax
\typeout{** WARNING: IEEEtran.bst: No hyphenation pattern has been}%
\typeout{** loaded for the language `#1'. Using the pattern for}%
\typeout{** the default language instead.}%
\else
\language=\csname l@#1\endcsname
\fi
#2}}
\providecommand{\BIBdecl}{\relax}
\BIBdecl

\bibitem{icc2020}
J.~V. {Alegría}, F.~{Rusek}, J.~R. {Sánchez}, and O.~{Edfors}, ``Trade-offs
  in quasi-decentralized massive mimo,'' in \emph{2020 IEEE International
  Conference on Communications Workshops (ICC Workshops)}, 2020, pp. 1--6.

\bibitem{5G}
J.~G. {Andrews}, S.~{Buzzi}, W.~{Choi}, S.~V. {Hanly}, A.~{Lozano}, A.~C.~K.
  {Soong}, and J.~C. {Zhang}, ``What will 5{G} be?'' \emph{IEEE Journal on
  Selected Areas in Communications}, vol.~32, no.~6, pp. 1065--1082, 2014.

\bibitem{mmW}
Z.~{Pi} and F.~{Khan}, ``An introduction to millimeter-wave mobile broadband
  systems,'' \emph{IEEE Communications Magazine}, vol.~49, no.~6, pp. 101--107,
  2011.

\bibitem{marzetta}
T.~L. Marzetta, ``Noncooperative cellular wireless with unlimited numbers of
  base station antennas,'' \emph{IEEE Transactions on Wireless Communications},
  vol.~9, no.~11, pp. 3590--3600, 2010.

\bibitem{rusek}
F.~Rusek, D.~Persson, B.~K. Lau, E.~G. Larsson, T.~L. Marzetta, O.~Edfors, and
  F.~Tufvesson, ``Scaling up {MIMO}: {O}pportunities and challenges with very
  large arrays,'' \emph{IEEE Signal Processing Magazine}, vol.~30, no.~1, pp.
  40--60, 2013.

\bibitem{husha_data}
S.~{Hu}, F.~{Rusek}, and O.~{Edfors}, ``Beyond massive {MIMO}: The potential of
  data transmission with large intelligent surfaces,'' \emph{IEEE Transactions
  on Signal Processing}, vol.~66, no.~10, pp. 2746--2758, 2018.

\bibitem{lumami}
S.~Malkowsky \emph{et~al.}, ``The world's first real-time testbed for massive
  {MIMO}: Design, implementation, and validation,'' \emph{IEEE Access}, vol.~5,
  pp. 9073--9088, 2017.

\bibitem{argos}
C.~Shepard \emph{et~al.}, ``Argos: Practical many-antenna base stations,'' in
  \emph{Proceedings of the 18th Annual International Conference on Mobile
  Computing and Networking (Mobicom)}, New York, NY, USA, 2012, pp. 53--64.

\bibitem{bigstation}
\BIBentryALTinterwordspacing
Q.~Yang, X.~Li, H.~Yao, J.~Fang, K.~Tan, W.~Hu, J.~Zhang, and Y.~Zhang,
  ``Bigstation: Enabling scalable real-time signal processing in large
  {MU-MIMO} systems,'' in \emph{ACM SIGCOMM'13}.\hskip 1em plus 0.5em minus
  0.4em\relax ACM, 2013, pp. 399--410. [Online]. Available:
  \url{https://www.microsoft.com/en-us/research/publication/bigstation-enabling-scalable-real-time-signal-processing-in-large-mu-mimo-systems/}
\BIBentrySTDinterwordspacing

\bibitem{lis_hw_imp}
J.~V. {Alegría} and F.~{Rusek}, ``Achievable rate with correlated hardware
  impairments in large intelligent surfaces,'' in \emph{2019 IEEE 8th
  International Workshop on Computational Advances in Multi-Sensor Adaptive
  Processing (CAMSAP)}, 2019, pp. 559--563.

\bibitem{cell_free}
H.~Q. {Ngo}, A.~{Ashikhmin}, H.~{Yang}, E.~G. {Larsson}, and T.~L. {Marzetta},
  ``Cell-free massive mimo versus small cells,'' \emph{IEEE Transactions on
  Wireless Communications}, vol.~16, no.~3, pp. 1834--1850, 2017.

\bibitem{sc_cell_free}
E.~{Björnson} and L.~{Sanguinetti}, ``Scalable cell-free massive mimo
  systems,'' \emph{IEEE Transactions on Communications}, vol.~68, no.~7, pp.
  4247--4261, 2020.

\bibitem{zhang_prosp}
J.~{Zhang}, E.~{Björnson}, M.~{Matthaiou}, D.~W.~K. {Ng}, H.~{Yang}, and D.~J.
  {Love}, ``Prospective multiple antenna technologies for beyond 5{G},''
  \emph{IEEE Journal on Selected Areas in Communications}, vol.~38, no.~8, pp.
  1637--1660, 2020.

\bibitem{wu_irs}
Q.~{Wu} and R.~{Zhang}, ``Intelligent reflecting surface enhanced wireless
  network via joint active and passive beamforming,'' \emph{IEEE Transactions
  on Wireless Communications}, vol.~18, no.~11, pp. 5394--5409, 2019.

\bibitem{chen_irs}
J.~{Chen}, Y.~C. {Liang}, Y.~{Pei}, and H.~{Guo}, ``Intelligent reflecting
  surface: A programmable wireless environment for physical layer security,''
  \emph{IEEE Access}, vol.~7, pp. 82\,599--82\,612, 2019.

\bibitem{cavallaro}
K.~Li, R.~R. Sharan, Y.~Chen, T.~Goldstein, J.~R. Cavallaro, and C.~Studer,
  ``Decentralized baseband processing for massive {MU-MIMO} systems,''
  \emph{IEEE Journal on Emerging and Selected Topics in Circuits and Systems},
  vol.~7, no.~4, pp. 491--507, 2017.

\bibitem{larsson}
E.~Bertilsson, O.~Gustafsson, and E.~G. Larsson, ``A scalable architecture for
  massive {MIMO} base stations using distributed processing,'' in \emph{2016
  50th Asilomar Conference on Signals, Systems and Computers}, 2016, pp.
  864--868.

\bibitem{li_tradeoffs}
K.~{Li}, J.~{McNaney}, C.~{Tarver}, O.~{Castañeda}, C.~{Jeon}, J.~R.
  {Cavallaro}, and C.~{Studer}, ``Design trade-offs for decentralized baseband
  processing in massive mu-mimo systems,'' in \emph{2019 53rd Asilomar
  Conference on Signals, Systems, and Computers}, 2019, pp. 906--912.

\bibitem{isit_2019}
J.~R. {S\'{a}nchez}, J.~{Vidal Alegr\'{i}a}, and F.~{Rusek}, ``Decentralized
  massive {MIMO} systems: Is there anything to be discussed?'' in \emph{2019
  IEEE International Symposium on Information Theory (ISIT)}, 2019, pp.
  787--791.

\bibitem{muris}
M.~Sarajlic, F.~Rusek, J.~R. Sanchez, L.~Liu, and O.~Edfors, ``Fully
  decentralized approximate zero-forcing precoding for massive {MIMO}
  systems,'' \emph{IEEE Wireless Communications Letters}, pp. 1--1, 2019.

\bibitem{jesus}
J.~R. Sanchez, F.~Rusek, M.~Sarajlic, O.~Edfors, and L.~Liu, ``Fully
  decentralized massive {MIMO} detection based on recursive methods,'' in
  \emph{2018 IEEE International Workshop on Signal Processing Systems (SiPS)},
  2018, pp. 53--58.

\bibitem{vtc}
J.~{Vidal Alegr\'{i}a}, J.~R. {S\'{a}nchez}, F.~{Rusek}, L.~{Liu}, and
  O.~{Edfors}, ``Decentralized equalizer construction for large intelligent
  surfaces,'' in \emph{2019 IEEE 90th Vehicular Technology Conference
  (VTC2019-Fall)}, 2019, pp. 1--6.

\bibitem{zhang}
Z.~{Zhang}, H.~{Li}, Y.~{Dong}, X.~{Wang}, and X.~{Dai}, ``Decentralized signal
  detection via expectation propagation algorithm for uplink massive mimo
  systems,'' \emph{IEEE Transactions on Vehicular Technology}, pp. 1--1, 2020.

\bibitem{amiri}
A.~Amiri, S.~Rezaie, C.~N. Manchon, and E.~de~Carvalho, ``Distributed receivers
  for extra-large scale mimo arrays: A message passing approach,'' 2020.

\bibitem{inf_th}
T.~M. Cover and J.~A. Thomas, \emph{Elements of Information Theory (Wiley
  Series in Telecommunications and Signal Processing)}.\hskip 1em plus 0.5em
  minus 0.4em\relax USA: Wiley-Interscience, 2006.

\bibitem{mimo}
A.~Paulraj, R.~Nabar, and D.~Gore, \emph{Introduction to Space-Time Wireless
  Communications}, 1st~ed.\hskip 1em plus 0.5em minus 0.4em\relax USA:
  Cambridge University Press, 2008.

\bibitem{feedforward}
C.~{Jeon}, K.~{Li}, J.~R. {Cavallaro}, and C.~{Studer}, ``Decentralized
  equalization with feedforward architectures for massive mu-mimo,'' \emph{IEEE
  Transactions on Signal Processing}, vol.~67, no.~17, pp. 4418--4432, 2019.

\end{thebibliography}

% For peer review papers, you can put extra information on the cover
% page as needed:
% \ifCLASSOPTIONpeerreview
% \begin{center} \bfseries EDICS Category: 3-BBND \end{center}
% \fi
%
% For peerreview papers, this IEEEtran command inserts a page break and
% creates the second title. It will be ignored for other modes.
\IEEEpeerreviewmaketitle

\end{document}